\newcommand{\titulo}{Uncertainty-Based Semantics for Multi-Agent Knowing How Logics}
\renewcommand{\emptyset}{\varnothing}
\title{Uncertainty-Based Semantics for \\ Multi-Agent Knowing How Logics}
\author{%
  Carlos Areces \qquad\qquad Raul Fervari \qquad\qquad Andr\'es R. Saravia
  \institute{FAMAF, Universidad Nacional de C\'ordoba, \& CONICET, Argentina}
  \email{\{carlos.areces,rfervari\}@unc.edu.ar, andresrsaravia@mi.unc.edu.ar}
  \and
  Fernando R. Vel{\'a}zquez-Quesada\institute{ILLC, Universiteit van Amsterdam, The Netherlands}\email{F.R.VelazquezQuesada@uva.nl}
}
\begin{document}


\maketitle

\begin{abstract}
  We introduce a new semantics for a multi-agent epistemic operator of
\emph{knowing how}, based on an indistinguishability relation between
plans. Our proposal is, arguably, closer to the standard presentation
of \emph{knowing that} modalities in classical epistemic logic.  We
study the relationship between this semantics and
previous approaches, showing that our setting is general enough to
capture them. We also define a sound and
complete axiomatization, and investigate the computational complexity
of its model checking and satisfiability problems. 


\end{abstract}

\section{Introduction}\label{sec:intro}

Epistemic logic~(EL; \cite{Hintikka:kab,RAK}) is a logical formalism tailored
for reasoning about the knowledge of abstract autonomous entities, commonly
called agents (e.g., a human being, a robot, a vehicle). Most standard
epistemic logics deal with an agent's knowledge about the truth-value
of propositions (the notion of \emph{knowing that}). Thus, they focus
on the study of sentences like \emph{``the agent knows that it is sunny in
  Paris''} or \emph{``the robot knows that it is standing next to a
  wall''}.

At the semantic level, EL formulas are typically interpreted over
relational models~\cite{mlbook,blackburn06}: essentially, labeled
directed graphs. The elements of the domain (called \emph{states} or
\emph{worlds}) represent different possible situations. Each agent has
associated a relation (interpreted as an \emph{epistemic
  indistinguishability} relation), used to represent its
uncertainty: related states are considered indistinguishable for the
agent. 
An agent is said to know that a proposition
$\varphi$ is true at a given state $s$ if and only if $\varphi$ holds
in all states she cannot distinguish from $s$. It is typically assumed
that the indistinguishability relation is an equivalence relation. In
spite of its simplicity, this indistinguishability-based
representation of knowledge has several advantages. First, it also
represents the agent's \emph{high-order} knowledge (knowledge about
her own knowledge and that of other agents). Second, it allows a very natural
representation of actions through which knowledge changes (epistemic updates, see, e.g.,~\cite{DELbook,vanBenthem2011ldii}).

In recent years, other patterns of knowledge besides knowing
  that have been investigated (see the discussion in~\cite{Wang16}).  Some
examples are \emph{knowing whether}~\cite{Hart:1996,FWvD15},
\emph{knowing why}~\cite{artemov2008logic,XuW16} and \emph{knowing the
  value}~\cite{GW16,Baltag16,EijckGW17}.  Motivated by different
scenarios in philosophy and AI, languages for reasoning about
\emph{knowing how} assertions~\cite{fantl2012introduction} are
particularly interesting. Intuitively, an agent knows how to achieve
$\varphi$ given $\psi$ if she has the \emph{ability} to guarantee that
$\varphi$ will be the case whenever she is in a situation in which
$\psi$ holds.

There is a large literature connecting
knowing how with logics of knowledge and action (see,
e.g.,~\cite{Mccarthy69,Moore85,Les00,Hoek00,HerzigT06}). However,
these proposals for representing knowing how have been the
target of criticisms. The main issue is that a simple
combination of standard operators expressing \emph{knowing that} and
\emph{ability} (see, e.g.,~\cite{wiebeetal:2003}) does not seem to lead to a
natural notion of knowing how 
(see~\cite{JamrogaA07,Herzig15} for a discussion).

Taking these considerations into account, \cite{Wang15lori,Wang16,Wang2016}
introduced a framework based on a knowing how binary modality
$\kh(\psi,\varphi)$. At the semantic level, this language is also
interpreted over relational models --- called in this context labeled
transition systems (LTSs). But relations do not represent
indistinguishability anymore; they rather describe the actions an
agent has at her disposal (in some sense, her \emph{abilities}).  An
edge labeled $a$ going from state $w$ to state $u$ indicates that the
agent can execute action $a$ to transform state $w$ into $u$. In the
proposed semantics, $\kh(\psi,\varphi)$ holds if and only if there is
a ``plan'' (a sequence of actions satisfying a constraint called
strong executability) in the LTS that unerringly leads from every
$\psi$-state only to $\varphi$-states.  Other variants of this
knowing how operator follow a similar approach
(see~\cite{Li17,LiWang17,FervariHLW17,Wang19a}).

In these proposals, relations are interpreted as the agent's available
actions (as it is done in, e.g., propositional dynamic logic~\cite{HKT00}); 
and the knowing how of an agent is directly
defined by what these actions can achieve. This is in sharp contrast
with EL, where relational models have two kinds of information:
ontic facts about a given situation (represented by the current state in the model), and the particular perspective that agents have
(represented by the possible states available in the model, and their respective indistinguishability relation between them).\footnote{Notice that in a multi-agent scenario, all agents share the same ontic information, and differ on their epistemic interpretation of it. We will come back to this later.}
If one would like to mirror the situation in EL, it seems natural that knowing how should be defined in terms of some kind of indistinguishability over the
 information provided by an LTS.  Such an extended model would be
able to capture both the
abilities of an agent as given by her available actions, together with the (in)abilities that arise when considering two different actions/plans/executions
indistinguishable.

This paper introduces a new semantics for $\kh_i(\psi,\varphi)$, a
multi-agent version of the knowing how modality. The crucial idea is
the inclusion of a notion of epistemic indistinguishability over
plans, in the spirit of the \emph{strategy indistinguishability} of, e.g.,~\cite{JamrogaH04,Belardinelli14}. We interpret formulas over an \emph{uncertainty-based LTS
  (\ults)} which is an LTS equipped with an indistinguishability
relation over plans. An agent may have different alternatives at her disposal to try to achieve a goal, all ``as good as any other'' (and
in that sense indistinguishable) as far as she can tell.
In this way, \ults{s} aims to reintroduce the notion of epistemic
indistinguishability, now at the level of plans. 
Moreover, the use of \ults{s} leads to a
natural definition of operators that represent dynamic aspects of
knowing how (e.g., the concept of \emph{learning how}
can be modeled by eliminating uncertainty between plans).

\ssparagraph{Our contributions.} They can be summarized as follows: \begin{inlineenum} \item We introduce a new semantics for $\kh_i(\psi,\varphi)$ (for $i$ an agent) that reintroduces the notion of epistemic indistinguishability from classical EL. \item We show that the logic obtained is weaker (and this is an advantage, as we will discuss) than the logic from~\cite{Wang15lori,Wang16,Wang2016}. Still, the new semantics is general enough to capture previous proposals by imposing adequate conditions on the class of models. %
%
%
\item We present a sound and complete axiomatization for the logic over the class of all \ults{s}.
\item We prove that the satisfiability problem for the new logic is \NP-complete, whereas model checking is in \Poly.
\end{inlineenum}

  \ssparagraph{Outline.} \Cref{sec:kh-lts} recalls the framework
  of~\cite{Wang15lori,Wang16,Wang2016}, including its axiom
  system. \Cref{sec:kh-ults} introduces \emph{uncertainty-based LTS},
  indicating how it can be used for interpreting a multi-agent version
  of the knowing how language, and providing an axiom
  system in \Cref{sec:axiom}. \Cref{subsec:comparing} studies the correspondence between
  our semantics and the ones in the previous proposals. \Cref{subsec:complexity} studies the
  computational complexity of model checking and the satisfiability problem for our
  logic. \Cref{sec:final} provides conclusions and future
  lines of research.


\section{A logic of knowing how}\label{sec:kh-lts}
This section recalls the basic \emph{knowing how} framework from \cite{Wang15lori,Wang16,Wang2016}.

\ssparagraph{Syntax and semantics.} Throughout the text, let $\PROP$ be a countable non-empty set of propositional symbols.

\begin{definition}\label{def:khsyntax}
  Formulas of the language \KHlogic are given by the following grammar:
  \begin{nscenter}
    $\varphi ::= p \mid \neg\varphi \mid \varphi\vee\varphi \mid \kh(\varphi,\varphi)$,
  \end{nscenter}
  with $p \in \PROP$. Other Boolean connectives are defined as usual. Formulas of the form $\kh(\psi,\varphi)$ are read as \emph{``when $\psi$ holds, the agent knows how to make $\varphi$ true''}.
\end{definition}

In \cite{Wang15lori,Wang16,Wang2016} (and variations like \cite{LiWang17,Li17}), formulas of $\KHlogic$ are interpreted over \emph{labeled transition systems}: relational models in which the relations describe the state-transitions available to the agent. Throughout the text, let $\ACT$ be a denumerable set of (basic) action names. 

\begin{definition}[Actions and plans]
  Let $\ACT^*$ be the set of finite sequences over $\ACT$. Elements of
  $\ACT^*$ are called \emph{plans}, with $\epsilon$ being the
  \emph{empty plan}. 
  Given $\sigma \in \ACT^*$, let $\card{\sigma}$ be
  the length of $\sigma$ ($\card{\epsilon} := 0$). For
  $0 \le k \le \card{\sigma}$, the plan $\sigma_k$ is $\sigma$'s initial
  segment up to (and including) the $k$th position (with
  $\sigma_0 := \epsilon$). For $0 < k \le \card{\sigma}$, the action
  $\sigma[k]$ is the one in $\sigma$'s $k$th position.
\end{definition}

\begin{definition}[Labeled transition systems]\label{def:abmap}
  A \emph{labeled transition system (\lts)} for $\PROP$ and $\ACT$ is
  a tuple $\modlts = \tup{\W,\R,\V}$ where $\W$ is a non-empty set of
  states (also denoted by $\D{\modlts}$),
  $\R = \csetsc{\R_a \subseteq \W \times \W}{a \in \ACT}$ is a
  collection of binary relations on $\W$, and $\V:\W \to 2^\PROP$ is a
  labelling function. Given an \lts $\modlts$ and $w \in \D{\modlts}$,
  the pair $(\modlts,w)$ is a \emph{pointed \lts} (parentheses are
  usually dropped).
\end{definition}

An \lts describes the \emph{abilities} of the agent; thus, sometimes (e.g., \cite{Wang15lori,Wang16,Wang2016}) it is also called an \emph{ability map}. Here are some useful definitions.

\begin{definition}
  Let $\csetsc{\R_a \subseteq \W \times \W}{a \in \ACT}$ be a collection of binary relations. Define $\R_\epsilon := \csetsc{(w,w)}{w \in \W}$ and, for $\epsilon\neq\sigma \in \ACT^*$ and $a \in \ACT$, $\R_{\sigma{a}} := \csetc{(w,u)}{\W \times \W}{ \exists v \in \W \text{ s.t. } (w,v) \in \R_{\sigma} \text{ and } (v,u) \in \R_{a} }$. Take a plan $\sigma \in \ACT^*$: for $u \in \W$ define $\R_{\sigma}(u) := \csetsc{v \in \W}{(u,v) \in \R_{\sigma}}$, and for $U \subseteq \W$ define $\R_{\sigma}(U) := \bigcup_{u \in U} \R_{\sigma}(u)$.
\end{definition}

Intuitively, \cite{Wang15lori,Wang16,Wang2016} defines that an agent
knows how to achieve $\varphi$ given $\psi$ when she has an
appropriate plan that allows her to go from any situation in which
$\psi$ holds to only states in which $\varphi$
holds. A crucial part is, then, what ``appropriate'' is taken to be.


\begin{definition}[Strong executability]\label{def:plans-exec}
  Let $\csetsc{\R_a \subseteq \W \times \W}{a \in \ACT}$ be a collection of binary relations. A plan $\sigma \in \ACT^*$ is 
  \emph{strongly executable} (SE) at $u \in \W$ if and only if $v \in \R_{\sigma_k}(u)$ implies $\R_{\sigma[k+1]}(v) \neq \emptyset$ for every $k \in \intint{0}{\card{\sigma}-1}$. We define the set $\stexec(\sigma):= \cset{w\in\W \mid \sigma \mbox{ is SE at }w}$. 
\end{definition}

Thus, 
strong executability asks for \emph{every} partial execution of the plan (which might be $\epsilon$) to be completed. 
With this notion, formulas in $\KHlogic$ are interpreted over an LTS as follows (the semantic clause for the $\kh$ modality is equivalent to the one found in the original papers).

\begin{definition}[\KHlogic over \ltss]\label{def:sem-abmap}
  The relation $\models$ between a pointed \lts $\modlts, w$ (with $\modlts=\tup{\W,\R,\V}$ an \lts over \ACT and \PROP) and formulas in \KHlogic (over \PROP) is defined inductively as follows:
 \begin{nscenter}
   \begin{tabular}{@{}lcl@{}}
     $\modlts,w \models p$ & \iffdef & $w\in\V(p)$, \\
     $\modlts,w \models \neg\varphi$ & \iffdef & $\modlts,w \not\models\varphi$, \\ 
     $\modlts,w \models \varphi\vee\psi$ & \iffdef & $\modlts,w \models \varphi \,\mbox{ or }\, \modlts,w \models\psi$, \\
     $\modlts,w \models \kh(\psi,\varphi)$ & \iffdef & \begin{minipage}[t]{0.725\textwidth}
                                                         $\exists \sigma \in \ACT^*$ such that \begin{inline-cond-kh} \item $\truthset{\modlts}{\psi} \subseteq \stexec(\sigma)$ and \item $\R_\sigma(\truthset{\modlts}{\psi}) \subseteq \truthset{\modlts}{\varphi}$,
                                                         \end{inline-cond-kh}
                                                       \end{minipage}
   \end{tabular}
 \end{nscenter}
  with $\truthset{\modlts}{\varphi} := \csetc{w}{\W}{\modlts,w \models \varphi}$ (the elements of $\truthset{\modlts}{\varphi}$ are sometimes called $\varphi$-states). 
\end{definition}

$\kh(\psi,\varphi)$ holds when there is a plan $\sigma$ such that,
when it is executed at any $\psi$-state, it will always complete every partial
execution (condition \ITMKHi), ending unerringly in states satisfying
$\varphi$ (condition \ITMKHii).  Notice that $\kh$ acts \emph{globally},
i.e.,  $\truthset{\modlts}{\kh(\psi,\varphi)}$ is either
$\D{\modlts}$ or  $\emptyset$.

\msparagraph{Axiomatization.} The universal modality
\cite{GorankoP92}, interpreted as truth in every state of the model,
is definable in $\KHlogic$ as $\A\varphi := \kh(\neg\varphi,\bot)$. This is justified by the following proposition, whose proof relies on the fact that $\ACT^*$ is never empty (it always contains $\epsilon$).

\begin{proposition}[\cite{Wang15lori}]\label{prop:lts:universal}
  Let $\modlts, w$ be a pointed \lts. Then,
  $\modlts, w \models \kh(\neg\varphi,\bot)$ iff $\truthset{\modlts}{\varphi} = \D{\modlts}$. 
\end{proposition}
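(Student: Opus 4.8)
The plan is to unfold \Cref{def:sem-abmap} for the specific instance $\kh(\neg\varphi,\bot)$ and reduce both sides to a single emptiness condition. Since $\truthset{\modlts}{\neg\varphi} = \D{\modlts}\setminus\truthset{\modlts}{\varphi}$ and $\truthset{\modlts}{\bot}=\emptyset$, the semantic clause states that $\modlts,w\models\kh(\neg\varphi,\bot)$ holds iff there is a plan $\sigma\in\ACT^*$ with $\truthset{\modlts}{\neg\varphi}\subseteq\stexec(\sigma)$ and $\R_\sigma(\truthset{\modlts}{\neg\varphi})\subseteq\emptyset$. I would then show that the existence of such a $\sigma$ is equivalent to $\truthset{\modlts}{\neg\varphi}=\emptyset$, which is exactly $\truthset{\modlts}{\varphi}=\D{\modlts}$.

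For the right-to-left direction, assume $\truthset{\modlts}{\varphi}=\D{\modlts}$, so that $\truthset{\modlts}{\neg\varphi}=\emptyset$. Here I use that $\ACT^*$ is never empty: take $\sigma:=\epsilon$ as witness. Both conditions become trivial, since $\emptyset\subseteq\stexec(\epsilon)$ and $\R_\epsilon(\emptyset)=\emptyset\subseteq\emptyset$. Hence $\modlts,w\models\kh(\neg\varphi,\bot)$. Any plan would in fact work; $\epsilon$ is merely the most convenient witness.

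For the left-to-right direction---the crux---I would argue by contraposition. Suppose $\truthset{\modlts}{\neg\varphi}\neq\emptyset$ and fix some $u\in\truthset{\modlts}{\neg\varphi}$. Let $\sigma$ be any plan satisfying the two semantic conditions; in particular $\sigma$ is strongly executable at $u$. The key step is a small auxiliary lemma: if $\sigma$ is SE at $u$, then $\R_\sigma(u)\neq\emptyset$. This follows by induction on $\card{\sigma}$. The base case $\sigma=\epsilon$ gives $\R_\epsilon(u)=\{u\}\neq\emptyset$, and the inductive step uses \Cref{def:plans-exec} to extend any state reached after the initial segment $\sigma_k$ by one more $\sigma[k+1]$-transition, so that $\R_{\sigma_{k+1}}(u)\neq\emptyset$ whenever $\R_{\sigma_k}(u)\neq\emptyset$. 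Applying the lemma, pick $v\in\R_\sigma(u)\subseteq\R_\sigma(\truthset{\modlts}{\neg\varphi})$. By the second semantic condition, $v\in\truthset{\modlts}{\bot}=\emptyset$, a contradiction. Therefore $\truthset{\modlts}{\neg\varphi}=\emptyset$, i.e., $\truthset{\modlts}{\varphi}=\D{\modlts}$.

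The only nontrivial point is the auxiliary lemma that strong executability guarantees the whole plan reaches at least one state; everything else is bookkeeping about the truth sets of $\neg\varphi$ and $\bot$. I expect no real obstacle, the one subtlety being to handle the empty plan correctly---its strong executability is vacuous and its image under $\R_\epsilon$ is the singleton $\{u\}$---which is precisely what makes both directions go through smoothly.
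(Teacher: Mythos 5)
Your proof is correct and follows the same route the paper points to: the paper itself only cites this result from Wang, noting that the proof ``relies on the fact that $\ACT^*$ is never empty (it always contains $\epsilon$)'', which is exactly your witness $\sigma := \epsilon$ in the right-to-left direction, while your left-to-right direction supplies the standard auxiliary lemma that strong executability of $\sigma$ at $u$ forces $\R_\sigma(u) \neq \emptyset$, making $\R_\sigma(\truthset{\modlts}{\neg\varphi}) \subseteq \truthset{\modlts}{\bot} = \emptyset$ contradictory unless $\truthset{\modlts}{\neg\varphi} = \emptyset$. The only cosmetic point is that your ``induction on $\card{\sigma}$'' is really an induction on the prefix index $k \in \intint{0}{\card{\sigma}}$ for a fixed $\sigma$, which is how you in fact carry it out, so nothing is missing.
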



\begin{table}[h!]
  \begin{cfootnotesizetabular}{ll@{\;}ll@{\;}l}
    \toprule
    \underline{Block $\axset$:}        & \axm{TAUT}   & $\vdash \varphi$ \; for $\varphi$ a propositional tautology &
                                         \axm{MP}     & From $\vdash \varphi$ and $\vdash \varphi \limp \psi$ infer $\vdash \psi$ \\
                                       & \axm{DISTA}  & $\vdash \A(\varphi\ra\psi) \ra (\A\varphi \ra \A\psi)$ &
                                         \axm{NECA}   & From $\vdash \varphi$ infer $\vdash \A\varphi$ \\
                                       & \axm{TA}     & $\vdash \A\varphi \ra \varphi$ \\
                                       & \axm{4KhA}   & $\vdash \kh(\psi,\varphi) \ra \A\kh(\psi,\varphi)$ &
                                         \axm{5KhA}   & $\vdash \neg\kh(\psi,\varphi) \ra \A\neg\kh(\psi,\varphi)$ \\
    \midrule
    \underline{Block $\axset_{\lts}$:} & \axm{EMP}    & $\vdash \A(\psi \ra \varphi) \ra \kh(\psi,\varphi)$ &
                                         \axm{COMPKh} & $\vdash (\kh(\psi,\varphi)\wedge\kh(\varphi,\chi))\ra\kh(\psi,\chi)$ \\
    \bottomrule
  \end{cfootnotesizetabular}
  \caption{Axiom system \KHaxiom, for \KHlogic w.r.t. \ltss.}\label{tab:khaxiom}
\end{table}

The axiom system \KHaxiom in~\Cref{tab:khaxiom} shows that $\A$ and $\kh$ are strongly interconnected. 

\begin{theorem}[\cite{Wang15lori}]\label{teo:kh-sound-complete-lts}
   \KHaxiom is sound and strongly complete for \KHlogic w.r.t. the class of all \ltss.
\end{theorem}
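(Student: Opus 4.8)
The plan is to establish soundness and strong completeness separately. Soundness is routine: I would check that each axiom is valid and each rule preserves validity over the class of all \ltss. The propositional rules (TAUT, MP) and the interpretation of $\A$ as truth in every state immediately yield the S5-style laws DISTA, NECA and TA. The introspection axioms 4KhA and 5KhA follow from the observation, already noted after \Cref{def:sem-abmap}, that $\truthset{\modlts}{\kh(\psi,\varphi)}$ is always either $\D{\modlts}$ or $\emptyset$: since the truth value of a $\kh$-formula is constant across the model, it agrees with its $\A$-prefixed version. The only genuinely semantic checks are in block $\axset_{\lts}$. For EMP, if $\A(\psi\ra\varphi)$ holds then $\truthset{\modlts}{\psi}\subseteq\truthset{\modlts}{\varphi}$, and the empty plan $\epsilon$ witnesses $\kh(\psi,\varphi)$, being strongly executable at every state (vacuously) with $\R_\epsilon$ the identity — this is exactly where $\epsilon\in\ACT^*$ is used. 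For COMPKh, given witnessing plans $\sigma$ for $\kh(\psi,\varphi)$ and $\tau$ for $\kh(\varphi,\chi)$, I would show that the concatenation $\sigma\tau$ witnesses $\kh(\psi,\chi)$; this requires a small auxiliary lemma that $\sigma\tau$ is strongly executable at every $\psi$-state (combining executability of $\sigma$ on $\psi$-states with that of $\tau$ on the $\varphi$-states reached by $\sigma$) and that $\R_{\sigma\tau}(\truthset{\modlts}{\psi})\subseteq\truthset{\modlts}{\chi}$.

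For strong completeness I would argue contrapositively, showing that every \KHaxiom-consistent set of formulas is satisfiable; a Lindenbaum extension to a maximal consistent set (MCS) $\Gamma$ reduces the task to building a single pointed model. The S5 behaviour of $\A$ together with 4KhA and 5KhA guarantees that all MCSs containing every $\varphi$ with $\A\varphi\in\Gamma$ also agree on every $\kh$-formula, so I take the domain of the canonical model $\modlts^{c}$ to be exactly this cluster of MCSs, with the obvious valuation. The delicate point is defining the action relations so that $\kh$ comes out right. The idea is to introduce one action $a_{\psi,\varphi}$ for each pair of formulas with $\kh(\psi,\varphi)\in\Gamma$, and to let $\R_{a_{\psi,\varphi}}$ connect every $\psi$-MCS in the cluster to every $\varphi$-MCS in the cluster; there are only countably many such actions, so they embed into the denumerable \ACT.

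With the relations fixed, the heart is a Truth Lemma proved by induction on formulas, the $\kh$ case being the crux. For the direction ``$\kh(\psi,\varphi)\in\Gamma$ implies $\modlts^{c},\Gamma\models\kh(\psi,\varphi)$'' I would use the one-step plan $a_{\psi,\varphi}$: by construction its image on $\psi$-states lies inside the $\varphi$-states, while strong executability reduces to the existence of a $\varphi$-MCS whenever a $\psi$-MCS exists. That existence is itself a lemma proved from the axioms: if $\kh(\psi,\varphi)\in\Gamma$ and $\varphi$ were unsatisfiable in the cluster, i.e.\ $\A(\varphi\ra\bot)\in\Gamma$, then EMP gives $\kh(\varphi,\bot)\in\Gamma$ and COMPKh gives $\kh(\psi,\bot)\in\Gamma$, which is $\A\neg\psi$ by the definition $\A\varphi:=\kh(\neg\varphi,\bot)$, so $\psi$ would be unsatisfiable too. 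The converse direction — that no spurious plan witnesses a $\kh$-formula absent from $\Gamma$ — is where I expect the main difficulty. Here I would show that whenever a plan $\sigma=a_{\psi_1,\varphi_1}\cdots a_{\psi_n,\varphi_n}$ witnesses $\kh(\psi,\varphi)$ in $\modlts^{c}$, strong executability forces each stage to align, yielding $\A(\psi\ra\psi_1)$, each $\A(\varphi_i\ra\psi_{i+1})$, and $\A(\varphi_n\ra\varphi)$; then EMP turns these inclusions into $\kh$-facts and repeated COMPKh composes them with the $\kh(\psi_i,\varphi_i)\in\Gamma$ to derive $\kh(\psi,\varphi)\in\Gamma$ (with the empty/identity plan handled directly by EMP). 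Managing strong executability of the multi-step plans and the bookkeeping of these inclusions is the technical core of the argument.

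Finally, applying the Truth Lemma at $\Gamma$ shows that the original consistent set is satisfied at $\Gamma$ in $\modlts^{c}$, which establishes strong completeness. I would note that, because the canonical construction accommodates arbitrary (possibly infinite) consistent sets directly, no separate compactness argument is needed.
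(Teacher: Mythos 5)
Your proposal is correct and follows essentially the same route as the paper: this theorem is recalled from \cite{Wang15lori}, and both that original proof and the paper's own adaptation of it for \ultss{} (\Cref{def:cm-ults-lkhi}, \Cref{tlm:cm-ults-lkhi}) use exactly your construction --- soundness via the empty plan for \axm{EMP} and plan concatenation for \axm{COMPKh}, then a canonical model whose actions are pairs $\tup{\psi,\varphi}$ with $\kh(\psi,\varphi) \in \Gamma$, whose relations link $\psi$-MCSs to $\varphi$-MCSs within the cluster agreeing on $\kh$-formulas, and a truth lemma whose hard direction decomposes a witnessing plan stage by stage and recombines it using \axm{EMP} and \axm{COMPKh}.
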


Some axioms deserve comment.  If $\A$ is taken as primitive, and
$\A\varphi$ is interpreted as $\varphi$ is true at every state in an
\lts, then $\axm{EMP}$ states that if $\psi \to \varphi$ is a globally true implication, then given $\psi$ the agent has the ability to make $\varphi$ true. In simpler words, global ontic
information turns into knowledge.  One could argue that, more
realistically, there are global truths in a model that are still
beyond the abilities of the agent. The case of $\axm{COMPKh}$ is
similar (as it also implies a certain level of omniscience) but
perhaps less controvertial.  It might well be that an agent knows how
to make $\varphi$ true given $\psi$, and how to make $\chi$ true given
$\varphi$, but still have not workout how to put the two together to
ensure that $\chi$ given $\psi$.  As we will see in the next section,
both these axioms can be correlated with strong assumptions on the
uncertainty relation between plans that an agent might have.


\section{Uncertainty-based semantics}\label{sec:kh-ults}
The \lts-based semantics provides a possible representation of an
agent's abilities: the agent knows how to achieve $\varphi$ given
$\psi$ if and only if there is a plan that, when run at any
$\psi$-state, will always complete every partial execution, ending
unerringly in states satisfying $\varphi$.  One could argue that
this representation involves a certain level of idealization. 

Consider an agent \emph{lacking} a certain ability. In the \lts-based
semantics, this can only happen when the environment does not provide the required
(sequence of) action(s). But one can think of scenarios where an adequate plan
exists, and yet the agent lacks the ability for a different reason. Indeed, she might
\emph{fail to distinguish} the adequate plan from a non-adequate one, in the
sense of not being able to tell that, in general, those plans produce
a different outcome. Consider, for example, an agent baking a
cake. She might have the ability to do the
\href{https://www.perfectlypastry.com/the-importance-of-the-mixing-method/}{nine
  different mixing methods} (beating, blending, creaming, cutting,
folding, kneading, sifting, stirring, whipping), and she might even
recognize them as different actions. However, she might not be able to
perfectly distinguish one from the others in the sense of not recognizing that
sometimes they produce different results. In such cases, one would say that the
agent does not know how to make certain cake: sometimes she gets good
outcomes (when she uses the adequate mixing method) and sometimes she
does not.

Indistinguishability among \emph{basic} actions can account for the
example above (with each mixing method a basic action). Still, one can also
think of situations in which a more general indistinguishability \emph{among plans} is involved.  Consider the baking agent
again. It is reasonable to assume that she can tell the difference
between ``adding milk'' and ``adding flour'', but perhaps she does not
realize the effect that \emph{the order} for mixing ingredients might
have in the final result. Here, the issue is not that she cannot
distinguish between basic actions; rather, two plans are
indistinguishable because the order of their actions is being
  considered irrelevant. For a last possibility, the agent might
not know that, while opening the oven once to check whether the baking
goods are done is reasonable, this must not be done in excess. In this
case, the problem is not being able to tell the difference between
the effect of executing an action once and executing it multiple times.
Thus, even plans of \emph{different length}
might be considered indistinguishable.
  
The previous examples suggest that one can devise a more general
representation of an agent's abilities. This representation involves
taking into account not only
the plans she has available (the \lts structure), but also her skills
for telling two different plans apart (a form of
\emph{indistinguishability among plans}).
As we will see, the use of an indistinguishability relation among
plans will also let us define a natural model for a multi-agent
scenario. In this setting, agents share the same set of \emph{affordances} (provided
by the actual environment), but still have different \emph{abilities} depending
on which of these affordances they have available, and how well they can tell these affordances appart.

\begin{definition}[Uncertainty-based \lts]\label{def:ults}
  Let \AGT be a finite non-empty set of agents.  A \emph{multi-agent
    uncertainty-based \lts (\ults)} for $\PROP$, $\ACT$ and $\AGT$ is
  a tuple $\modults = \tup{\W,\R,\sim,\V}$ where $\tup{\W, \R, \V}$ is
  an \lts and $\sim$ assigns, to each agent $i \in \AGT$, an
  equivalence \emph{indistinguishability} relation over a non-empty
  set of plans $\DS{i} \subseteq \ACT^*$. Given an \ults $\modults$
  and $w \in \D{\modults}$, the pair $(\modults,w)$ (parenthesis
  usually dropped) is called a \emph{pointed \ults}.
\end{definition}

Intuitively, $\DS{i}$ is the set of plans that agent $i$ has at her
disposal. Similarly as in classical epistemic logic,
${\sim_i} \subseteq \DS{i} \times \DS{i}$ describes agent $i$'s
indistinguishability. But this time, this relation is not defined over possible 
states of affairs, but rather over her available plans.

\begin{remark}\label{rem:corresp}
  The following change in notation will simplify some definitions
  later on, and will make the comparison with the \lts-based semantics
  clearer.  Let $\tup{\W, \R, \sim, \V}$ be an \ults and take
  $i \in \AGT$; for a plan $\sigma \in \DS{i}$, let $[\sigma]_i$ be
  its equivalence class in $\sim_i$ (i.e.,
  $[\sigma]_i := \csetc{\sigma'}{\DS{i}}{ \sigma \sim_i
    \sigma'}$).

  There is a one-to-one correspondence between $\sim_i$ and its induced
  set of equivalence classes
  $\S_i := \csetsc{[\sigma]_i}{\sigma \in \DS{i}}$. Hence, from now on
  an \ults will be presented as a tuple $\tup{\W, \R, \S, \V}$, with
  $\S = \csetsc{\S_i}{i \in \AGT}$. Notice the following
  properties: \begin{inlineenum} \item
    $\strategy_1\neq\strategy_2 \in \S_i$ implies
    $\strategy_1 \cap \strategy_2 = \emptyset$, \item
    $\DS{i} = \bigcup_{\strategy \in \S_i} \strategy$ and \item
    $\emptyset \notin \S_i$\end{inlineenum}.
\end{remark}

Given her uncertainty over $\ACT^*$, the abilities of an agent $i$
depend not on what a single plan can achieve, but rather on what a set
of them can guarantee.

\begin{definition}
  For $\strategy \subseteq \ACT^*$ and $U \cup \{u \} \subseteq \W$
  define $\R_\strategy := \bigcup_{\sigma \in \strategy} \R_{\sigma}$,
  $\R_{\strategy}(u) := \bigcup_{\sigma \in \strategy} \R_\sigma(u)$,
  and $\R_{\strategy}(U) := \bigcup_{u \in U} \R_{\strategy}(u)$.
\end{definition}

We can now define strong executability for sets of plans.

\begin{definition}[Strong executability]\label{def:strat-exec}
  A set of plans $\strategy \subseteq \ACT^*$ is \emph{strongly
    executable} at $u \in \W$ if and only if \emph{every} plan
  $\sigma \in \strategy$ is \emph{strongly executable} at $u$.
  Hence, $\stexec(\strategy) = \bigcap_{\sigma \in \strategy}
  \stexec(\sigma)$ is the set of the states in $\W$ where $\strategy$ is strongly
  executable.
\end{definition}


\begin{definition}[$\khi$ over \ultss]\label{def:sem-esm}
  The satisfiability relation $\models$ between a pointed \ults $\modults, w$
  (with $\modults = \tup{\W, \R, \S, \V}$ an \ults over \ACT, \PROP and
  \AGT) and formulas in the multi-agent version of $\KHlogic$ (denoted
  by $\KHilogic$, and obtained by replacing $\kh$ with $\khi$, $i\in\AGT$) is
  defined inductively. The atomic and Boolean cases are as before. For
  \emph{knowing how} formulas,
  \begin{nscenter}
    \begin{tabular}{@{}lcl@{}}
      $\modults,w \models \khi(\psi,\varphi)$ & \iffdef & \begin{minipage}[t]{0.725\textwidth}
                                                            $\exists \strategy \in \S_i$ such that \begin{inline-cond-kh} \item $\truthset{\modults}{\psi} \subseteq \stexec(\strategy)$ and \item $\R_\strategy(\truthset{\modults}{\psi}) \subseteq \truthset{\modults}{\varphi}$,
                                                            \end{inline-cond-kh}
                                                          \end{minipage}
    \end{tabular}
  \end{nscenter}
  with $\truthset{\modults}{\varphi} := \csetc{w}{\W}{\modults,w \models \varphi}$.
\end{definition}

It is worth comparing~\Cref{def:sem-abmap} and~\Cref{def:sem-esm}.
As before, $\khi(\psi,\varphi)$ acts
\emph{globally}. Moreover, we now require \emph{for
  agent $i$} to have a \emph{set of plans} satisfying strong
executability in every $\psi$-state (condition \ITMKHi). Still, the
set of plans should work as the single plan did before: when executed at
$\psi$-states, it should end unerringly in states satisfying $\varphi$
(condition \ITMKHii).

The rest of the section is devoted to explore the properties of the logic with our
new semantics. Moreover, we compare it to the well-known framework from~\cite{Wang15lori,Wang16,Wang2016}.

\subsection{Axiomatization} \label{sec:axiom}

We start by establishing that the universal 
modality is again definable within $\KHilogic$ over
\ults (it is crucial that $\S_i \neq \emptyset$ and
$\emptyset \not\in \S_i$, see \Cref{rem:corresp}).  

\begin{proposition}\label{pro:ults:universal}
  Given $\modults, w$  a pointed \ults, then ($\exists i \in \AGT$,
   $\modults, w \models \khi(\neg\varphi,\bot)$) iff $\truthset{\modults}{\varphi} = \D{\modults}$. 
\end{proposition}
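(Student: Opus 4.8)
The plan is to unfold the semantics of $\khi$ and reduce both directions to a single observation about strong executability. First I would note that $\truthset{\modults}{\bot} = \emptyset$, so by \Cref{def:sem-esm} the assertion $\modults, w \models \khi(\neg\varphi,\bot)$ amounts to the existence of some $\strategy \in \S_i$ with (i) $\truthset{\modults}{\neg\varphi} \subseteq \stexec(\strategy)$ and (ii) $\R_\strategy(\truthset{\modults}{\neg\varphi}) = \emptyset$. Writing $X := \truthset{\modults}{\neg\varphi} = \D{\modults} \setminus \truthset{\modults}{\varphi}$, the target equivalence becomes: there is some $i \in \AGT$ and some $\strategy \in \S_i$ satisfying (i) and (ii) iff $X = \emptyset$.

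For the right-to-left direction, assume $\truthset{\modults}{\varphi} = \D{\modults}$, i.e.\ $X = \emptyset$. Then both conditions hold vacuously for \emph{any} strategy: (i) reads $\emptyset \subseteq \stexec(\strategy)$ and (ii) reads $\R_\strategy(\emptyset) = \emptyset$. It only remains to exhibit a witness. Since $\AGT$ is non-empty I can pick any agent $i$, and since $\S_i \neq \emptyset$ with $\emptyset \notin \S_i$ (\Cref{rem:corresp}) I can pick some $\strategy \in \S_i$. This is exactly the point where the non-emptiness hypotheses flagged in the statement are needed: without them there would be no strategy available to serve as a witness.

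For the left-to-right direction I would argue by contraposition, resting on the key auxiliary fact that a strongly executable plan always ``goes somewhere'': if $\sigma$ is SE at $u$, then $\R_\sigma(u) \neq \emptyset$. I would prove this by induction on $\card{\sigma}$. The base case $\sigma = \epsilon$ is immediate since $\R_\epsilon(u) = \{u\}$; for the inductive step, write $\sigma = \tau a$, observe that the prefix $\tau = \sigma_{\card{\sigma}-1}$ is also SE at $u$ (its defining condition ranges over a subset of the positions checked for $\sigma$), so the induction hypothesis gives some $v \in \R_\tau(u)$, and then \Cref{def:plans-exec} applied at the last position yields $\R_a(v) \neq \emptyset$, which composes to $\R_\sigma(u) \neq \emptyset$. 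Granting this, suppose $X \neq \emptyset$ and fix $u \in X$. If $\strategy \in \S_i$ witnesses $\khi(\neg\varphi,\bot)$, then by (i) every plan in $\strategy$ is SE at $u$, and since $\emptyset \notin \S_i$ there is at least one $\sigma \in \strategy$. The auxiliary fact gives $\R_\sigma(u) \neq \emptyset$, hence $\R_\strategy(X) \supseteq \R_\sigma(u) \neq \emptyset$, contradicting (ii). Therefore $X = \emptyset$.

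I expect the only non-trivial ingredient to be the auxiliary induction, and even there the subtlety is purely bookkeeping: verifying that SE of $\sigma$ at $u$ restricts to SE of its prefix $\tau$ at $u$. This mirrors the proof of \Cref{prop:lts:universal} for the single-plan \lts case, the difference being that the role played there by ``$\ACT^*$ always contains $\epsilon$'' is played here by ``$\S_i$ is non-empty and does not contain $\emptyset$''.
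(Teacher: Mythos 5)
Your proof is correct and takes essentially the approach the paper intends: the paper states this proposition without spelling out a proof, but the ingredients it flags as crucial ($\S_i \neq \emptyset$ and $\emptyset \notin \S_i$, together with the parallel to \Cref{prop:lts:universal}) are exactly the ones your argument turns on, and you correctly locate where each is needed (non-emptiness of $\S_i$ for the witness in the right-to-left direction, $\emptyset \notin \S_i$ to extract a plan in the left-to-right direction). Your auxiliary induction showing that a strongly executable plan has a non-empty image is the same fact implicitly underlying the \lts-based analogue, so there is nothing to correct.
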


Hence, by taking $\A \varphi := \bigvee_{i \in \AGT} \khi(\neg\varphi,\bot)$ (recall
that $\AGT$ is non-empty and finite) and $\E\varphi:=\neg\A\neg\varphi$, it turns out that formulas in $\axset$
(first part of \Cref{tab:khaxiom}) are still valid, generalizing $\kh$ to
$\khi$.  As discussed in the next section, some valid formulas in $\axset_{\lts}$ can be
falsified over \ults{s}. But the weaker theorems of \KHaxiom shown
in~\Cref{tab:khiaxiom} (see \Cref{pro:subsystem}) are still valid, and can be used to define a
complete axiomatic system.

\begin{table}[h!]
  \begin{cfootnotesizetabular}{lll@{\qquad}ll}
    \toprule
    \underline{Block $\axset_{\ults}$:} & \axm{KhE} & $\vdash \left(\E\psi \land \khi(\psi,\varphi)\right) \limp \E\varphi$ &
                                          \axm{KhA} & $\vdash \left(\A(\chi \limp \psi) \land \khi(\psi,\varphi) \land \A(\varphi \limp \theta)\right) \limp \khi(\chi, \theta)$ \\
    \bottomrule
  \end{cfootnotesizetabular}
  \caption{Axioms $\axset_{\ults}$, for \KHilogic w.r.t. \ultss.}\label{tab:khiaxiom}
\end{table}

\axm{KhA} can be subjected to some of the criticism that apply to
\axm{EMP} and \axm{COMPKh} but, in our opinion, to a lesser extent. It
 implies certain level of idealization, as it entails that the
knowing how of an agent is, in a sense, closed under global
entailment. \axm{KhE} on the other hand, seems  plausible:
if $\khi(\psi,\varphi)$ is not trivial (given that $\E\psi$ holds),
then $\E\varphi$ should be assured.


Let us define the system \KHiaxiom := $\axset$  + $\axset_{\ults}$ (\Cref{tab:khiaxiom}).
We will show that the system is sound and strongly complete over \ults{s}.  
The proof of soundness is rather straighforward, thus we will focus on completeness. 
Following \cite{Wang15lori,Wang2016}, the strategy is to build, for any \KHiaxiom-consistent set of formulas, an \ults satisfying them. Note:

\begin{proposition}
  The following are theorems of \KHiaxiom:
  \begin{multicols}{2}
    \begin{itemize}
      \item[] \axm{SCOND}: \;\; $\vdash \A\lnot \psi \limp \khi(\psi, \varphi)$;
      \item[] \axm{COND}: \;\; $\vdash \khi(\bot, \varphi)$.
    \end{itemize}
  \end{multicols}
\end{proposition}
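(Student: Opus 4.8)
The plan is to derive both \axm{SCOND} and \axm{COND} from the axiom \axm{KhA} together with the propositional machinery in block $\axset$ and the definition of $\A$. The key observation is that \axm{KhA} lets us weaken the antecedent and strengthen the consequent of a known-how statement, so the strategy is to produce one trivially valid instance of $\khi(\psi,\varphi)$ and then reshape it into the desired form.

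For \axm{SCOND}, I would start from a degenerate instance of $\khi$. The natural candidate is $\khi(\bot,\bot)$ or $\khi(\psi,\psi)$, something provable outright. First I would argue that $\khi(\psi,\psi)$ is a theorem: using \axm{KhA} with $\chi := \bot$, one can introduce an arbitrary known-how formula from a base case, but more directly I would try to establish a base instance and then apply \axm{KhA}. Concretely, take any instance $\khi(\theta,\theta)$ (reflexive goal) as the seed; then instantiate \axm{KhA} with the hypothesis $\A(\psi \limp \theta) \land \khi(\theta,\theta) \land \A(\theta \limp \varphi)$ to conclude $\khi(\psi,\varphi)$. Under the assumption $\A\lnot\psi$, both $\A(\psi \limp \theta)$ and the relevant side conditions become propositionally derivable (since $\A\lnot\psi$ entails $\A(\psi \limp \chi)$ for any $\chi$, via \axm{DISTA}, \axm{NECA} and propositional reasoning). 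So the core of the \axm{SCOND} derivation is: from $\A\lnot\psi$ obtain $\A(\psi\limp\theta)$ for a suitable $\theta$, combine with a seed known-how theorem and a trivially derivable $\A(\theta\limp\varphi)$, and invoke \axm{KhA}. I expect the main obstacle here to be pinning down exactly which seed instance of $\khi$ is available as a theorem before \axm{SCOND} itself is proved, so that the argument is not circular.

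For \axm{COND}, the statement $\khi(\bot,\varphi)$ should follow immediately once \axm{SCOND} is available, since $\A\lnot\bot$ is a theorem: $\lnot\bot$ is a propositional tautology (\axm{TAUT}), so \axm{NECA} gives $\vdash \A\lnot\bot$, and then \axm{MP} with the \axm{SCOND} instance $\vdash \A\lnot\bot \limp \khi(\bot,\varphi)$ yields $\vdash \khi(\bot,\varphi)$. Thus \axm{COND} is essentially a corollary of \axm{SCOND}, and the real work is concentrated in the first item.

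The hard part, as noted, will be bootstrapping the very first nontrivial theorem of the form $\khi(\cdot,\cdot)$ from \axm{KhA} alone, since \axm{KhA} transforms existing known-how statements rather than generating them from nothing. I would resolve this by checking whether a reflexive instance such as $\khi(\psi,\psi)$ can be obtained — for example by noting that $\A(\psi\limp\psi)$ holds (as $\psi\limp\psi$ is a tautology closed under \axm{NECA}) and feeding it into an appropriately instantiated \axm{KhA}, or by appealing to a minimal seed guaranteed by the semantics and its soundness. Once such a seed is secured, the remaining steps are routine propositional manipulations using \axm{TAUT}, \axm{MP}, \axm{DISTA}, and \axm{NECA}, so I would not grind through them in detail.
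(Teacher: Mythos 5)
There is a genuine gap, and it is exactly the one you flag yourself: the seed. Both of your fallback ideas for producing a first knowing-how theorem fail. The reflexive seed $\khi(\psi,\psi)$ is not merely hard to derive, it is not even \emph{valid} over \ultss: the model used in \Cref{fact:axiom-fail} to refute \axm{EMP} falsifies precisely $\khi(p,p)$, so no sound derivation of it can exist. Moreover, as long as you treat $\A$ as a primitive box governed by \axm{TAUT}, \axm{NECA}, \axm{DISTA}, \axm{TA}, the bootstrapping really is impossible: every instance of \axm{KhA} has a $\khi$-conjunct in its antecedent, so feeding it $\A$-theorems alone can never detach a categorical $\khi$-formula. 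Your other fallback, ``appealing to a minimal seed guaranteed by the semantics and its soundness,'' confuses validity with derivability; the direction you would need is completeness, which is proved later \emph{from} this very proposition, so that route is circular.

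The idea your attempt is missing is that in \KHiaxiom the modality $\A$ is not primitive: $\A\chi$ abbreviates $\bigvee_{j\in\AGT}\kh_j(\lnot\chi,\bot)$ (cf.\ \Cref{pro:ults:universal}). Hence $\A$-theorems already \emph{are} knowing-how theorems. Concretely, $\vdash\top$ by \axm{TAUT}, so $\vdash\A\top$ by \axm{NECA}, which literally is $\vdash\bigvee_{j\in\AGT}\kh_j(\lnot\top,\bot)$; then \axm{KhA}, together with the \axm{NECA}-theorems $\A(\bot\limp\lnot\top)$ and $\A(\bot\limp\varphi)$, reshapes each disjunct $\kh_j(\lnot\top,\bot)$ into $\kh_j(\bot,\varphi)$. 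This also shows that the derivation should run in the opposite order to yours: \axm{COND} is the seed, and \axm{SCOND} then follows by one further application of \axm{KhA}, using $\A\lnot\psi\limp\A(\psi\limp\bot)$ (from \axm{TAUT}, \axm{NECA}, \axm{DISTA}, \axm{MP}) and $\A(\bot\limp\varphi)$; deriving \axm{SCOND} first, as you propose, is what forces you into the circle. Be aware, finally, of a residual agent-indexing subtlety that your attempt never touches: in a genuinely multi-agent signature the argument above only yields the disjunction $\bigvee_{j\in\AGT}\kh_j(\bot,\varphi)$, not $\khi(\bot,\varphi)$ for each fixed $i$ (the single-agent case is immediate, since there the disjunction is a single formula). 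Detaching a fixed agent is a real obstacle: one can check that every axiom and rule of \KHiaxiom remains sound over generalized \ultss in which some agent has $\S_i=\emptyset$ while another does not, and in such structures the fixed-agent $\khi(\bot,\bot)$ is false; so any complete proof of the proposition as stated must confront this point explicitly.
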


We proceed with the definition of the \emph{canonical model}.

\begin{definition}[Canonical model]\label{def:cm-ults-lkhi}
  Let $\smcs$ be the set of all maximally \KHiaxiom-consistent sets (MCS) of formulas in \KHilogic. For any $\Delta \in \smcs$, define 
  \begin{nscenter}
  $
    \begin{array}{r@{\;:=\;}l@{\qquad\qquad}r@{\;:=\;}l}
      \restkhi{\Delta}  & \csetc{\xi}{\Delta}{\xi \;\text{is of the form}\; \khi(\psi,\varphi)}, &
      \restkh{\Delta}   & \bigcup_{i \in \AGT} \restkhi{\Delta}.
    \end{array}
    $
  \end{nscenter}

  Let $\Gamma$ be a set in $\smcs$. Define, for each agent $i \in \AGT$, the set of basic actions $\ACT^\Gamma_i := \csetsc{\tup{\psi,\varphi}}{\khi(\psi,\varphi) \in \Gamma}$, and $\ACT^\Gamma := \bigcup_{i \in \AGT} \ACT^\Gamma_i$. Notice that \axm{COND} implies that $\khi(\bot, \bot) \in \Gamma$ for every $i \in \AGT$; since there is at least one agent, this implies that $\ACT^\Gamma$ is non-empty, and thus it is an adequate set of actions. It is worth noticing that the set $\ACT^\Gamma$ fixes a new signature. However, since the operators of the language cannot see the names of the actions, we can define a mapping from $\ACT^\Gamma$ to any particular $\ACT$, to 
  preserve the original signature.

  \noindent Then, the structure $\modults^\Gamma$, defined over $\ACT^\Gamma$, $\AGT$ and $\PROP$, is the tuple $\tup{\W^\Gamma, \R^\Gamma, \cset{\S^\Gamma_i}_{i \in \AGT}, \V^\Gamma}$ where
  \begin{compactitemize}
    \item $\W^\Gamma := \csetc{\Delta}{\smcs}{\restkh{\Delta} = \restkh{\Gamma}}$,

    \item $\R^{\Gamma}_{\tup{\psi,\varphi}} := \bigcup_{i \in \AGT} \R^{\Gamma} _{\tup{\psi,\varphi}^i}$, with $\R^{\Gamma}_{\tup{\psi,\varphi}^i} := \csetc{(\Delta_1, \Delta_2)}{\W^\Gamma \times \W^\Gamma}{\khi(\psi,\varphi) \in \Gamma, \psi \in \Delta_1, \varphi \in \Delta_2}$,

    \item $\S^\Gamma_i := \left\{ \cset{\tup{\psi,\varphi}} \mid \tup{\psi,\varphi} \in \ACT^\Gamma_i \right\}$,

    \item $V^\Gamma(\Delta) := \csetc{p}{\PROP}{p \in \Delta}$.
  \end{compactitemize}
\end{definition}

If $\Gamma \in \smcs$, then $\modults^\Gamma$ is a structure of the required type.

\begin{proposition}\label{pro:cm-ults-lkhi}
  The structure $\modults^\Gamma = \tup{\W^\Gamma, \R^\Gamma, \cset{\S^\Gamma_i}_{i \in \AGT}, \V^\Gamma}$ is an \ults.
  \begin{proof}
    It is enough to show that each $\S^\Gamma_i$ defines a partition over a non-empty subset of $\pow{\ACT^*}$. First, \axm{COND} implies $\khi(\bot, \bot) \in \Gamma$, so $\tup{\bot, \bot} \in \ACT^\Gamma_i$ and hence $\cset{\tup{\bot, \bot}} \in \S^\Gamma_i$; thus, $\bigcup_{\strategy \in \S_i} \strategy \neq \emptyset$. Then, $\S_i$ indeed defines a partition over $\bigcup_{\strategy \in \S_i} \strategy$: its elements are mutually disjoint (they are singletons with different elements), collective exhaustiveness is immediate and, finally, $\emptyset \notin \S^\Gamma_i$.
  \end{proof}
\end{proposition}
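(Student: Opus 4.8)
The plan is to verify directly that $\modults^\Gamma$ meets every requirement of \Cref{def:ults}, read through the reformulation of \Cref{rem:corresp}. The relational part is immediate and I would dispatch it first: $\W^\Gamma$ is non-empty because $\restkh{\Gamma} = \restkh{\Gamma}$ puts $\Gamma$ itself into $\W^\Gamma$; the family $\R^\Gamma$ consists of binary relations on $\W^\Gamma$ indexed by the actions in $\ACT^\Gamma$; and $\V^\Gamma$ is a labelling $\W^\Gamma \to 2^\PROP$. Hence $\tup{\W^\Gamma, \R^\Gamma, \V^\Gamma}$ is an \lts, and the only genuine obligation is to show that, for each $i \in \AGT$, the family $\S^\Gamma_i$ is the set of equivalence classes of an equivalence relation over a non-empty $\DS{i} \subseteq \ACT^*$; by \Cref{rem:corresp} this amounts to checking that $\S^\Gamma_i$ partitions a non-empty set of plans, i.e.\ its members are pairwise disjoint, collectively exhaustive, and none of them is $\emptyset$.

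For non-emptiness I would appeal to \axm{COND}. Since $\vdash \khi(\bot,\bot)$ is a theorem, it belongs to every MCS; in particular $\khi(\bot,\bot) \in \Gamma$, so $\tup{\bot,\bot} \in \ACT^\Gamma_i$ and therefore $\cset{\tup{\bot,\bot}} \in \S^\Gamma_i$. This makes $\S^\Gamma_i$ non-empty and, since $\DS{i} = \bigcup_{\strategy \in \S^\Gamma_i} \strategy = \ACT^\Gamma_i$, guarantees $\DS{i} \neq \emptyset$ as well. The remaining partition conditions follow from the shape of the construction with no extra effort: every element of $\S^\Gamma_i$ is a singleton $\cset{\tup{\psi,\varphi}}$ with $\tup{\psi,\varphi} \in \ACT^\Gamma_i$, so two distinct elements are singletons of distinct basic actions and hence disjoint, none of them is empty, and their union is exactly $\ACT^\Gamma_i = \DS{i}$, which is collective exhaustiveness.

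Finally I would note that the indistinguishability relation $\sim_i$ underlying $\S^\Gamma_i$ is just the identity on $\ACT^\Gamma_i$, which is trivially an equivalence relation having $\S^\Gamma_i$ as its quotient, so the structure is of the required type. I do not expect any real obstacle here: the construction has been set up so that strategies are singletons, making the disjointness and exhaustiveness checks purely set-theoretic. The single place where the proof system actually enters is the non-emptiness of $\DS{i}$, which rests on \axm{COND} (and, for $\ACT^\Gamma$ as a whole, on the standing assumption that $\AGT$ is non-empty); everything else is bookkeeping.
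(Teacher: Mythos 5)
Your proposal is correct and follows essentially the same route as the paper's proof: the key step in both is invoking \axm{COND} to place $\cset{\tup{\bot,\bot}}$ in $\S^\Gamma_i$, after which disjointness, collective exhaustiveness, and $\emptyset \notin \S^\Gamma_i$ all fall out of the fact that the strategies are singletons of distinct basic actions. Your additional remarks (explicitly checking the \lts component and identifying $\sim_i$ as the identity on $\ACT^\Gamma_i$) are harmless bookkeeping that the paper leaves implicit.
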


Let $\Gamma \in \smcs$, the following properties of $\modults^\Gamma$ are useful (proofs are similar to the ones in~\cite{Wang2016}).

\begin{proposition}\label{pro:cm-ults-lkhi-allsameKH}
  For any $\Delta_1, \Delta_2 \in \W^\Gamma$ we have $\restkh{\Delta_1} = \restkh{\Delta_2}$.
\end{proposition}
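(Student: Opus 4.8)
The plan is to prove that any two worlds in the canonical model $\W^\Gamma$ agree on all knowing-how formulas, i.e. $\restkh{\Delta_1} = \restkh{\Delta_2}$ for $\Delta_1,\Delta_2 \in \W^\Gamma$. The key observation is that the defining condition of $\W^\Gamma$ is exactly $\restkh{\Delta} = \restkh{\Gamma}$; so the statement should follow almost immediately by transitivity, but the real content is ensuring that this definitional constraint genuinely pins down the full $\khi$-content of each world, which in turn rests on the validity of the axioms $\axm{4KhA}$ and $\axm{5KhA}$ (from block $\axset$) that make $\khi$-formulas behave globally.

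First I would recall that both $\Delta_1$ and $\Delta_2$ belong to $\W^\Gamma$, which by \Cref{def:cm-ults-lkhi} means $\restkh{\Delta_1} = \restkh{\Gamma}$ and $\restkh{\Delta_2} = \restkh{\Gamma}$. From these two equalities, transitivity of set equality immediately gives $\restkh{\Delta_1} = \restkh{\Delta_2}$. At this level the proof is a one-line consequence of the definition of the domain. The substantive point, which I expect the authors intend to highlight, is \emph{why} the domain can be carved out this way and still contain $\Gamma$ itself together with enough worlds to run a truth lemma; that is where the axioms come in.

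The deeper justification I would spell out is that $\khi$ acts globally, so membership of a $\khi$-formula in an MCS is not a local accident but is forced to be uniform across $\restkh$-equivalent MCSs. Concretely, axioms $\axm{4KhA}$ ($\vdash \khi(\psi,\varphi) \limp \A\khi(\psi,\varphi)$) and $\axm{5KhA}$ ($\vdash \lnot\khi(\psi,\varphi) \limp \A\lnot\khi(\psi,\varphi)$), which are available as instances of block $\axset$ generalized to $\khi$, guarantee that the truth-value of every knowing-how formula is settled by the universal modality and hence shared by all states that the canonical accessibility for $\A$ connects. This is precisely what makes the equivalence relation ``$\restkh{\cdot} = \restkh{\Gamma}$'' a coherent choice of domain: any $\Delta$ reachable within the cluster determined by $\Gamma$ must carry the same set of $\khi$-formulas, so restricting $\W^\Gamma$ to this set loses no needed worlds.

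The main obstacle is therefore not the final algebraic manipulation but making explicit the dependence on $\axm{4KhA}$/$\axm{5KhA}$: I would verify that for an arbitrary knowing-how formula $\khi(\psi,\varphi)$, if it lies in $\Delta_1$ then by $\axm{4KhA}$ the formula $\A\khi(\psi,\varphi)$ lies in $\Delta_1$, and since $\Delta_1,\Delta_2 \in \W^\Gamma$ share the same $\restkh$, the formula must also lie in $\Delta_2$ (the contrapositive using $\axm{5KhA}$ handles the negative case). Given the remark that the proof is ``similar to the ones in~\cite{Wang2016}'', I would keep this brief, citing the global behavior of $\khi$ recorded just after \Cref{def:sem-esm} and the structure of $\W^\Gamma$, and conclude by the transitivity argument above.
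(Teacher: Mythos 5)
Your core argument is exactly the paper's: since membership in $\W^\Gamma$ is by definition the condition $\restkh{\Delta} = \restkh{\Gamma}$, the claim follows by transitivity of set equality, which is why the paper treats it as essentially immediate. The additional appeal to \axm{4KhA}/\axm{5KhA} is superfluous and in fact circular --- your verification step already assumes that $\Delta_1,\Delta_2$ share the same $\restkh$, which is the very thing being proved --- but since it is offered only as motivation, it does not affect the correctness of the one-line proof.
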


\begin{proposition}\label{pro:cm-ults-lkhi-oneall}
  Take $\Delta \in \W^\Gamma$. If $\Delta$ has a $\R^\Gamma_{\tup{\psi,\varphi}}$-successor, then every $\Delta' \in \W^\Gamma$ with $\varphi \in \Delta'$ can be $\R^\Gamma_{\tup{\psi,\varphi}}$-reached from $\Delta$.
\end{proposition}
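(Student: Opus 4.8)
The plan is to unfold the definition of the canonical edge relation and exploit the fact that membership in $\R^\Gamma_{\tup{\psi,\varphi}^i}$ decomposes into three independent conditions: a global side-condition on $\Gamma$, a condition on the source, and a condition on the target. The whole statement is then essentially a re-reading of this factorization.

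First I would use the hypothesis that $\Delta$ has a $\R^\Gamma_{\tup{\psi,\varphi}}$-successor. Since $\R^\Gamma_{\tup{\psi,\varphi}} = \bigcup_{i \in \AGT} \R^\Gamma_{\tup{\psi,\varphi}^i}$, there must be some agent $i \in \AGT$ and some $\Delta_2 \in \W^\Gamma$ with $(\Delta, \Delta_2) \in \R^\Gamma_{\tup{\psi,\varphi}^i}$. Unwinding the definition of $\R^\Gamma_{\tup{\psi,\varphi}^i}$ then yields three facts at once: $\khi(\psi,\varphi) \in \Gamma$, $\psi \in \Delta$ (the source condition, with $\Delta$ in the role of $\Delta_1$), and $\varphi \in \Delta_2$.

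The key observation is that, of these three, the first is purely global and the second depends only on the source $\Delta$; neither mentions the particular successor $\Delta_2$. The third is a condition purely on the target. Hence, given any $\Delta' \in \W^\Gamma$ with $\varphi \in \Delta'$, the \emph{same} agent $i$ witnesses $(\Delta, \Delta') \in \R^\Gamma_{\tup{\psi,\varphi}^i}$: we already have $\khi(\psi,\varphi) \in \Gamma$ and $\psi \in \Delta$ from the given successor, while $\varphi \in \Delta'$ holds by assumption. Therefore $(\Delta, \Delta') \in \R^\Gamma_{\tup{\psi,\varphi}^i} \subseteq \R^\Gamma_{\tup{\psi,\varphi}}$, which is exactly the reachability claimed.

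I do not expect a genuine obstacle here: the result rests entirely on the fact that the canonical relation never couples source and target beyond the shared global requirement $\khi(\psi,\varphi) \in \Gamma$, so any state satisfying the target condition $\varphi$ becomes reachable the moment one successor exists. The only point requiring care is to keep the witnessing agent $i$ fixed throughout the argument, so that one and the same component relation $\R^\Gamma_{\tup{\psi,\varphi}^i}$ is used both for the assumed successor and for the arbitrary target $\Delta'$.
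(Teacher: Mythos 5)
Your proof is correct and is essentially the argument the paper intends (the paper itself only remarks that the proof is analogous to the one in Wang's work): one simply unfolds the definition of $\R^\Gamma_{\tup{\psi,\varphi}^i}$ and observes that its three conditions decouple the source from the target, so the witnessing agent $i$ and the facts $\khi(\psi,\varphi)\in\Gamma$, $\psi\in\Delta$ carry over unchanged to any $\Delta'$ with $\varphi\in\Delta'$. Your care in keeping the same agent $i$ throughout is exactly the right point to flag.
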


\begin{proposition}\label{pro:cm-ults-lkhi-allall}
  Let $\varphi$ be an \KHilogic-formula. If $\varphi \in \Delta$ for every $\Delta \in \W^\Gamma$, then $\A\varphi \in \Delta$ for every $\Delta \in \W^\Gamma$.
\end{proposition}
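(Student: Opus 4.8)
The plan is to reduce the claim to a single representative world and then settle it with a Lindenbaum-style existence argument. First I would observe that $\A\varphi := \bigvee_{i \in \AGT} \khi(\neg\varphi,\bot)$ is a Boolean combination of $\khi$-formulas. By \Cref{pro:cm-ults-lkhi-allsameKH}, any two worlds of $\W^\Gamma$ contain exactly the same $\khi$-formulas, hence (being MCS) the same Boolean combinations of them; so membership of $\A\varphi$ is constant across $\W^\Gamma$. Since $\Gamma \in \W^\Gamma$, it therefore suffices to prove $\A\varphi \in \Gamma$, and the conclusion for every $\Delta \in \W^\Gamma$ follows immediately.

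Assume $\varphi \in \Delta$ for every $\Delta \in \W^\Gamma$, and let $T$ be the set of all $\khi$-formulas (ranging over every $i \in \AGT$), so that $\restkh{\Gamma} = \Gamma \cap T$. Consider the candidate theory $Y := \{\neg\varphi\} \cup (\Gamma \cap T) \cup \{\neg\xi \mid \xi \in T \setminus \Gamma\}$. If $Y$ were consistent, I would extend it to an MCS $\Delta'$; by construction $\Delta' \cap T = \Gamma \cap T$, that is $\restkh{\Delta'} = \restkh{\Gamma}$, so $\Delta' \in \W^\Gamma$, while $\neg\varphi \in \Delta'$ would contradict the standing assumption. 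Hence $Y$ must be inconsistent.

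From the inconsistency of $Y$ I would extract a finite conjunction $\Phi$ of $\khi$-literals taken from $(\Gamma \cap T) \cup \{\neg\xi \mid \xi \in T \setminus \Gamma\}$ with $\vdash \Phi \limp \varphi$. The decisive step is to establish $\vdash \Phi \limp \A\Phi$: each conjunct $\ell$ of $\Phi$ is a $\khi$-literal, so axioms \axm{4KhA} and \axm{5KhA} (read for $\khi$) give $\vdash \ell \limp \A\ell$, and the normality of $\A$ (from \axm{DISTA} and \axm{NECA}) lets me aggregate these into $\vdash \Phi \limp \A\Phi$. Combining $\vdash \Phi \limp \varphi$ with necessitation and distribution gives $\vdash \A\Phi \limp \A\varphi$, whence $\vdash \Phi \limp \A\varphi$. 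Finally, every conjunct of $\Phi$ lies in $\Gamma$ (the positive ones since they belong to $\Gamma \cap T$, the negated ones $\neg\xi$ since $\xi \notin \Gamma$ forces $\neg\xi \in \Gamma$), so $\Phi \in \Gamma$ and therefore $\A\varphi \in \Gamma$.

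I expect the derivation $\vdash \Phi \limp \A\Phi$ to be the main obstacle: it is precisely where the interaction axioms \axm{4KhA} and \axm{5KhA} between $\khi$ and the (defined) universal modality do their work, guaranteeing that a conjunction of $\khi$-literals entails its own universal necessitation. Everything else is routine: a Lindenbaum extension together with bookkeeping about the common $\khi$-profile shared by all worlds of $\W^\Gamma$.
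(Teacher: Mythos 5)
Your proof is correct and follows essentially the same route as the paper's: the paper defers this proposition to the canonical-model arguments of Wang (2016), and that argument is exactly your Lindenbaum-plus-$\khi$-profile scheme, using \axm{4KhA}/\axm{5KhA} together with the normality of the defined $\A$ (\axm{NECA}, \axm{DISTA}) to push a conjunction of $\khi$-literals under $\A$. The only cosmetic difference is that the cited proof is phrased as a reductio (assuming $\A\varphi \notin \Gamma$ and producing a world of $\W^\Gamma$ containing $\neg\varphi$), whereas you run the contrapositive directly; the ingredients are identical.
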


\begin{proposition}\label{pro:cm-ults-lkhi-succpre}
  Take $\psi, \psi', \varphi'$ in \KHilogic. Suppose that every $\Delta \in \W^\Gamma$ with $\psi \in \Delta$ has a $\R^{\Gamma}_{\tup{\psi',\varphi'}}$-successor. Then, $\A(\psi \limp \psi') \in \Delta$ for all $\Delta \in \W^\Gamma$.
\end{proposition}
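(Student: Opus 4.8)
The plan is to unpack the definition of the canonical relation and observe that the mere existence of an outgoing edge already constrains the \emph{source} state. Recall that $(\Delta, \Delta') \in \R^{\Gamma}_{\tup{\psi',\varphi'}}$ holds exactly when there is some agent $i$ with $\khi(\psi',\varphi') \in \Gamma$, together with $\psi' \in \Delta$ and $\varphi' \in \Delta'$. The crucial point is that $\psi' \in \Delta$ is a source-side requirement baked into the definition: any $\Delta \in \W^\Gamma$ that possesses a $\R^{\Gamma}_{\tup{\psi',\varphi'}}$-successor must itself contain $\psi'$.

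First I would fix an arbitrary $\Delta \in \W^\Gamma$ and argue that $\psi \limp \psi' \in \Delta$. If $\psi \notin \Delta$ this is immediate: since $\Delta$ is a maximally consistent set we have $\neg\psi \in \Delta$, and hence $\psi \limp \psi' \in \Delta$ by propositional closure (\axm{TAUT} plus \axm{MP}). If instead $\psi \in \Delta$, then by hypothesis $\Delta$ has a $\R^{\Gamma}_{\tup{\psi',\varphi'}}$-successor; by the observation above this forces $\psi' \in \Delta$, whence again $\psi \limp \psi' \in \Delta$. Thus $\psi \limp \psi'$ belongs to every $\Delta \in \W^\Gamma$.

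Finally I would invoke \Cref{pro:cm-ults-lkhi-allall} with the formula $\psi \limp \psi'$: since it is a member of every state of $\W^\Gamma$, its universal version $\A(\psi \limp \psi')$ is a member of every state of $\W^\Gamma$ as well, which is exactly the desired conclusion.

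The argument is essentially definitional, so I do not expect a serious obstacle. The only point requiring care is to notice that one should extract \emph{solely} the source condition $\psi' \in \Delta$ from the existence of a successor; the remaining conditions in the definition of $\R^{\Gamma}_{\tup{\psi',\varphi'}}$ (namely, that $\khi(\psi',\varphi') \in \Gamma$ for some $i$ and that some state of $\W^\Gamma$ contains $\varphi'$) are what make the hypothesis attainable, but they play no role in deriving $\psi \limp \psi'$ and should simply be discarded.
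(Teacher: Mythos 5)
Your proof is correct and is essentially the intended argument (the paper itself defers to the analogous proof in Wang's work): the source-side condition $\psi' \in \Delta_1$ built into the definition of $\R^{\Gamma}_{\tup{\psi',\varphi'}}$ forces $\psi' \in \Delta$ whenever $\Delta$ has a successor, the case split on whether $\psi \in \Delta$ gives $\psi \limp \psi' \in \Delta$ for every $\Delta \in \W^\Gamma$ by maximal consistency, and \Cref{pro:cm-ults-lkhi-allall} lifts this to $\A(\psi \limp \psi') \in \Delta$ for all $\Delta \in \W^\Gamma$. No gaps.
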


With these properties at hand, we can prove the truth lemma for $\modults^\Gamma$.

\begin{lemma}[Truth lemma for $\modults^\Gamma$]\label{tlm:cm-ults-lkhi}
  Given $\Gamma \in \smcs$, take $\modults^\Gamma = \tup{\W^\Gamma, \R^\Gamma, \cset{\S^\Gamma_i}_{i \in \AGT}, \V^\Gamma}$. Then, for every $\Theta \in \W^\Gamma$ and every $\varphi \in \KHilogic$,   $\modults^\Gamma, \Theta \models \varphi$ if and only if $\varphi \in \Theta$. 
  \begin{proof}
    The proof is by induction on $\varphi$, with the atom and Boolean cases as usual. For the rest:
    \begin{itemizenl}
      \item \textbf{Case $\boldsymbol{\khi(\psi,\varphi)}$.} {\prooflr} Suppose $\modults^\Gamma, \Theta \models \khi(\psi,\varphi)$, then consider two cases.
      \begin{itemize}
        \item $\bs{\truthset{\modults^\Gamma}{\psi} = \emptyset}$. Then, for each $\Delta \in \W^\Gamma$ we have $\Delta \not\in \truthset{\modults^\Gamma}{\psi}$, so $\psi \not\in \Delta$ (by IH) and thus $\lnot\psi \in \Delta$ (by maximal consistency). Hence, by \Cref{pro:cm-ults-lkhi-allall}, $\A\lnot\psi \in \Delta$ for every $\Delta \in \W^\Gamma$. In particular, $\A\lnot\psi \in \Theta$ and thus, by \axm{SCOND} and \axm{MP}, $\khi(\psi, \varphi) \in \Theta$.

        \item $\bs{\truthset{\modults^\Gamma}{\psi} \neq \emptyset}$. 
        By hypothesis, there is $\cset{\tup{\psi',\varphi'}} \in \S^\Gamma_i$ with \begin{inline-cond-kh} \item $\truthset{\modults^\Gamma}{\psi} \subseteq \stexec(\cset{\tup{\psi',\varphi'}})$ and \item $\R^\Gamma_{\cset{\tup{\psi',\varphi'}}}(\truthset{\modults^\Gamma}{\psi}) \subseteq \truthset{\modults^\Gamma}{\varphi}$\end{inline-cond-kh}. In other words, there is $\tup{\psi',\varphi'} \in \ACT^\Gamma_a$ such that
        \begin{cond-kh}
          \item\label{tlm:cm-esmiv-stexec-lkhi-itm:i} for all $\Delta \in \W^\Gamma$, if $\Delta \in \truthset{\modults^\Gamma}{\psi}$ then $\Delta \in \stexec(\cset{\tup{\psi',\varphi'}})$, so $\Delta \in \stexec(\tup{\psi',\varphi'})$ and therefore $\Delta$ has a $\R^\Gamma_{\tup{\psi',\varphi'}}$-successor.

          \item\label{tlm:cm-esmiv-stexec-lkhi-itm:ii} for all $\Delta' \in \W^\Gamma$, if there is $\Delta \in \truthset{\modults^\Gamma}{\psi}$ such that $(\Delta, \Delta') \in \R^\Gamma_{\tup{\psi',\varphi'}}$, then $\Delta' \in \truthset{\modults^\Gamma}{\varphi}$.
        \end{cond-kh}
        This case requires three pieces.
        \begin{enumerate}
          \item Take any $\Delta \in \W^\Gamma$ with $\psi \in \Delta$. Then, by IH, $\Delta \in \truthset{\modults^\Gamma}{\psi}$ and thus, by \Cref{tlm:cm-esmiv-stexec-lkhi-itm:i}, $\Delta$ has a $\R^\Gamma_{\tup{\psi',\varphi'}}$-successor. Thus, every $\Delta \in \W^\Gamma$ with $\psi \in \Delta$ has such successor; then (\Cref{pro:cm-ults-lkhi-succpre}), it follows that $\A(\psi \limp \psi') \in \Delta$ for every $\Delta \in \W^\Gamma$. In particular, $\A(\psi \limp \psi') \in \Theta$.

          \item From $\tup{\psi',\varphi'} \in \ACT^\Gamma_i$ it follows that $\khi(\psi',\varphi') \in \Gamma$. But $\Theta \in \W^\Gamma$, so $\restkh{\Theta} = \restkh{\Gamma}$ (by definition of $\W^\Gamma$). Hence, $\khi(\psi',\varphi') \in \Theta$.

          \item Since $\truthset{\modults^\Gamma}{\psi} \neq \emptyset$, there is $\Delta \in \truthset{\modults^\Gamma}{\psi}$. By \Cref{tlm:cm-esmiv-stexec-lkhi-itm:i}, $\Delta$ should have at least one $\R^\Gamma_{\tup{\psi',\varphi'}}$-successor. Then, by \Cref{pro:cm-ults-lkhi-oneall}, every $\Delta' \in \W^\Gamma$ satisfying $\varphi' \in \Delta'$ can be $\R^\Gamma_{\tup{\psi',\varphi'}}$-reached from $\Delta$; in other words, every $\Delta' \in \W^\Gamma$ satisfying $\varphi' \in \Delta'$ is in $\R^\Gamma_{\tup{\psi',\varphi'}}(\Delta)$. But $\Delta \in \truthset{\modults^\Gamma}{\psi}$, so every $\Delta' \in \W^\Gamma$ satisfying $\varphi' \in \Delta'$ is in $\R^\Gamma_{\tup{\psi',\varphi'}}(\truthset{\modults^\Gamma}{\psi})$. Then, by \Cref{tlm:cm-esmiv-stexec-lkhi-itm:ii}, every $\Delta' \in \W^\Gamma$ satisfying $\varphi' \in \Delta'$ is in $\truthset{\modults^\Gamma}{\varphi}$. By IH on the latter part, every $\Delta' \in \W^\Gamma$ satisfying $\varphi' \in \Delta'$ is such that $\varphi \in \Delta'$. Thus, $\varphi' \limp \varphi \in \Delta'$ for every $\Delta' \in \W^\Gamma$, and hence (\Cref{pro:cm-ults-lkhi-allall}) $\A(\varphi' \limp \varphi) \in \Delta'$ for every $\Delta' \in \W^\Gamma$. In particular, $\A(\varphi' \limp \varphi) \in \Theta$.
        \end{enumerate}
        Thus, $\cset{\A(\psi \limp \psi'), \khi(\psi',\varphi'), \A(\varphi' \limp \varphi)} \subset \Theta$. Therefore, by \axm{KhA} and \axm{MP}, $\khi(\psi, \varphi) \in \Theta$.
      \end{itemize}

      {\proofrl} Suppose $\khi(\psi, \varphi) \in \Theta$. Thus (\Cref{pro:cm-ults-lkhi-allsameKH}), $\khi(\psi, \varphi) \in \Gamma$, so $\tup{\psi, \varphi} \in \ACT^\Gamma_i$ and therefore $\cset{\tup{\psi, \varphi}} \in \S^\Gamma_i$. The rest of the proof is split in two cases.
      \begin{itemize}
        \item Suppose there is no $\Delta_\psi \in \W^\Gamma$ with $\psi \in \Delta$. Then, by IH, there is no $\Delta_\psi \in \W^\Gamma$ with $\Delta_\psi \in \truthset{\modults^\Gamma}{\psi}$, that is, $\truthset{\modults^\Gamma}{\lnot\psi} = \D{\W^\Gamma}$. Since $\modults^\Gamma$ is in \ults (\Cref{pro:cm-ults-lkhi}), the latter yields $(\modults^\Gamma, \Delta) \models \khi(\psi, \chi)$ for any $i \in \AGT$, $\chi \in \KHilogic$ and $\Delta \in \W^\Gamma$ (cf. \Cref{pro:ults:universal}); hence, $(\modults^\Gamma, \Theta) \models \khi(\psi, \varphi)$.

        \item Suppose there is $\Delta_\psi \in \W^\Gamma$ with $\psi \in \Delta_\psi$. It will be shown that the strategy $\cset{\tup{\psi, \varphi}} \in \S^\Gamma_i$ satisfies the requirements.
        \begin{cond-kh}
          \item Take any $\Delta \in \truthset{\modults^\Gamma}{\psi}$. By IH, $\psi \in \Delta$. Moreover, from $\khi(\psi, \varphi) \in \Theta$ and \Cref{pro:cm-ults-lkhi-allsameKH} it follows that $\khi(\psi, \varphi) \in \Delta$. Then, from $\R^\Gamma_{\tup{\psi,\varphi}^i}$'s definition, every $\Delta' \in \W^\Gamma$ with $\varphi \in \Delta'$ is such that $(\Delta, \Delta') \in \R^\Gamma_{\tup{\psi,\varphi}^i}$, and therefore such that $(\Delta, \Delta') \in \R^\Gamma_{\tup{\psi,\varphi}}$. Now note how, since there is $\Delta_\psi \in \W^\Gamma$ with $\psi \in \Delta_\psi$, there should be $\Delta_\varphi \in \W^\Gamma$ with $\varphi \in \Delta_\varphi$ (the proof uses \axm{KhE} and \axm{TA}).
          This implies that $(\Delta, \Delta_\varphi) \in \R^\Gamma_{\tup{\psi,\varphi}}$ and thus, since $\tup{\psi,\varphi}$ is a basic action, $\Delta \in \stexec(\tup{\psi,\varphi})$ so $\Delta \in \stexec(\cset{\tup{\psi,\varphi}})$. Since $\Delta$ is an arbitrary state in $\truthset{\modults^\Gamma}{\psi}$, the required $\truthset{\modults^\Gamma}{\psi} \subseteq \stexec(\cset{\tup{\psi,\varphi}})$ follows.

          \item Take any $\Delta' \in \R^\Gamma_{\cset{\tup{\psi,\varphi}}}(\truthset{\modults^\Gamma}{\psi})$. Then, there is $\Delta \in \truthset{\modults^\Gamma}{\psi}$ such that $(\Delta, \Delta') \in \R^\Gamma_{\tup{\psi,\varphi}}$. By definition of $\R^\Gamma$, it follows that $\varphi \in \Delta'$ so, by IH, $\Delta' \in \truthset{\modults^\Gamma}{\varphi}$. Since $\Delta'$ is an arbitrary state in $\R^\Gamma_{\cset{\tup{\psi,\varphi}}}(\truthset{\modults^\Gamma}{\psi})$, the required $\R^\Gamma_{\cset{\tup{\psi,\varphi}}}(\truthset{\modults^\Gamma}{\psi}) \subseteq \truthset{\modults^\Gamma}{\varphi}$ follows.
        \end{cond-kh}
      \end{itemize}
    \end{itemizenl}
  \end{proof}
\end{lemma}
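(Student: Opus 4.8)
The plan is to proceed by induction on the structure of $\varphi$. The propositional atoms follow immediately from the definition of $\V^\Gamma$, and the Boolean connectives are handled in the standard way using the maximal consistency of each $\Theta \in \W^\Gamma$ (so that $\lnot\chi \in \Theta$ iff $\chi \notin \Theta$, and dually for disjunction). All the genuine work is concentrated in the modal case $\khi(\psi,\varphi)$, which I would split into the two directions of the biconditional, taking advantage of the fact that every strategy in the canonical model is a singleton $\cset{\tup{\psi',\varphi'}}$ encoding a pair of formulas.

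For the direction $\modults^\Gamma, \Theta \models \khi(\psi,\varphi) \Rightarrow \khi(\psi,\varphi) \in \Theta$, I would first dispose of the degenerate case $\truthset{\modults^\Gamma}{\psi} = \emptyset$: here the induction hypothesis gives $\lnot\psi \in \Delta$ for every $\Delta \in \W^\Gamma$, so \Cref{pro:cm-ults-lkhi-allall} yields $\A\lnot\psi \in \Theta$, and \axm{SCOND} together with \axm{MP} closes the case. When $\truthset{\modults^\Gamma}{\psi} \neq \emptyset$, the semantic witness is a singleton $\cset{\tup{\psi',\varphi'}} \in \S^\Gamma_i$, and the goal becomes extracting the three conjuncts needed to fire \axm{KhA}: namely $\A(\psi \limp \psi')$, $\khi(\psi',\varphi')$ and $\A(\varphi' \limp \varphi)$. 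The first follows from strong executability of the witness at every $\psi$-state together with \Cref{pro:cm-ults-lkhi-succpre}; the second is immediate, since $\tup{\psi',\varphi'} \in \ACT^\Gamma_i$ forces $\khi(\psi',\varphi') \in \Gamma$ and $\restkh{\Theta} = \restkh{\Gamma}$; the third is obtained by using \Cref{pro:cm-ults-lkhi-oneall} to show that every $\varphi'$-state lies in $\R^\Gamma_{\tup{\psi',\varphi'}}(\truthset{\modults^\Gamma}{\psi})$, hence is a $\varphi$-state by condition (ii), so that $\varphi' \limp \varphi$ holds globally and \Cref{pro:cm-ults-lkhi-allall} upgrades it to $\A(\varphi' \limp \varphi)$.

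For the converse, suppose $\khi(\psi,\varphi) \in \Theta$. By \Cref{pro:cm-ults-lkhi-allsameKH} this membership propagates to $\Gamma$ and hence to every world, so $\tup{\psi,\varphi} \in \ACT^\Gamma_i$ and the singleton $\cset{\tup{\psi,\varphi}}$ is a legitimate strategy. If no world satisfies $\psi$, then $\truthset{\modults^\Gamma}{\psi}$ is empty and \Cref{pro:ults:universal} makes $\khi(\psi,\cdot)$ true everywhere; otherwise I would verify the two semantic conditions directly for $\cset{\tup{\psi,\varphi}}$. Condition (ii) is the easy half: any world reached from a $\psi$-state via $\R^\Gamma_{\tup{\psi,\varphi}}$ must contain $\varphi$ by construction of $\R^\Gamma$, hence is a $\varphi$-state by the induction hypothesis.

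The main obstacle, and the one step I would think hardest about, is verifying strong executability (condition (i)) in this converse direction: one must guarantee that every $\psi$-state actually possesses an $\R^\Gamma_{\tup{\psi,\varphi}}$-successor. The subtlety is that the relation only links $\psi$-states to $\varphi$-states, so the existence of a successor hinges on the existence of at least one $\varphi$-world in $\W^\Gamma$. This is precisely where \axm{KhE} (combined with \axm{TA}) is needed: from $\khi(\psi,\varphi) \in \Theta$ and the presence of a $\psi$-world I would derive $\E\varphi$, producing a $\varphi$-world, after which \Cref{pro:cm-ults-lkhi-oneall} guarantees that every $\psi$-state reaches it. I expect the bookkeeping around the empty-antecedent cases and the singleton-strategy encoding to be routine, while the real content lies in matching each semantic condition to the correct canonical-model proposition and axiom.
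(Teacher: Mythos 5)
Your proposal is correct in substance and follows essentially the same route as the paper's proof: the same split on whether $\truthset{\modults^\Gamma}{\psi}$ is empty, the same three conjuncts $\A(\psi \limp \psi')$, $\khi(\psi',\varphi')$, $\A(\varphi' \limp \varphi)$ assembled to fire \axm{KhA} in the left-to-right direction, and the same identification of the crux in the converse direction, namely that strong executability reduces to producing a $\varphi$-world via \axm{KhE} and \axm{TA}.

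One local repair is needed: in the converse direction you close condition (i) by saying that, once a $\varphi$-world exists, \Cref{pro:cm-ults-lkhi-oneall} guarantees every $\psi$-state reaches it. That proposition cannot be invoked there, since its hypothesis is precisely that the state already has an $\R^\Gamma_{\tup{\psi,\varphi}}$-successor --- the very fact you are trying to establish, so the appeal is circular. The paper instead concludes $(\Delta, \Delta_\varphi) \in \R^\Gamma_{\tup{\psi,\varphi}}$ directly from the definition of the canonical relation: since $\khi(\psi,\varphi) \in \Gamma$, $\psi \in \Delta$ and $\varphi \in \Delta_\varphi$, the pair is in $\R^\Gamma_{\tup{\psi,\varphi}^i}$ by construction, no auxiliary lemma required. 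With that substitution your argument goes through verbatim.
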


\begin{theorem}\label{teo:khi-sound-complete-ults}
  The axiom system \KHiaxiom := $\axset$  + $\axset_{\ults}$ (\Cref{tab:khiaxiom}) is sound and strongly complete for \KHilogic w.r.t. the class of all \ultss.
\begin{proof}
Take any \KHiaxiom-consistent set of formulas $\Gamma' \subseteq \KHilogic$. Since \KHilogic is enumerable, $\Gamma'$ can be extended into a maximally \KHiaxiom-consistent set $\Gamma \supseteq \Gamma'$ by a standard Lindenbaum's construction (see, e.g., \cite[Chapter 4]{mlbook}). By \Cref{tlm:cm-ults-lkhi}, $\Gamma'$ is satisfiable in $\modults^\Gamma$ at $\Gamma$. The fact that $\modults^\Gamma$ is in \ults (\Cref{pro:cm-ults-lkhi}) completes the proof. \closeproofapx
\end{proof}
\end{theorem}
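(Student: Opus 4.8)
The plan is to treat the two halves of the statement separately: soundness by a direct semantic verification of each axiom and rule against \Cref{def:sem-esm}, and strong completeness by a Henkin-style canonical model argument built on the apparatus already assembled in \Cref{def:cm-ults-lkhi}--\Cref{tlm:cm-ults-lkhi}.

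For soundness, I would check that every axiom of \KHiaxiom is valid over the class of all \ultss and that \axm{MP} and \axm{NECA} preserve validity. The block $\axset$ transfers from the single-agent \lts setting with only cosmetic changes: with $\A\varphi$ defined as $\bigvee_{i \in \AGT} \khi(\neg\varphi,\bot)$, \Cref{pro:ults:universal} certifies that $\A$ is interpreted as the genuine universal modality, which at once validates \axm{DISTA}, \axm{NECA} and \axm{TA}; and since each $\khi(\psi,\varphi)$ has truth set equal to either $\D{\modults}$ or $\emptyset$ (it is interpreted globally), \axm{4KhA} and \axm{5KhA} hold as well. The genuinely new work is \axm{KhE} and \axm{KhA}. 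For \axm{KhE}, a witnessing strategy $\strategy \in \S_i$ for $\khi(\psi,\varphi)$ is non-empty (as $\emptyset \notin \S_i$, see \Cref{rem:corresp}), so picking any $\sigma \in \strategy$ and any $\psi$-state supplied by $\E\psi$, strong executability of $\sigma$ there forces a successor that lands in $\R_\strategy(\truthset{\modults}{\psi}) \subseteq \truthset{\modults}{\varphi}$, witnessing $\E\varphi$. For \axm{KhA}, the very strategy that witnesses $\khi(\psi,\varphi)$ also witnesses $\khi(\chi,\theta)$ once $\A(\chi \limp \psi)$ places the $\chi$-states inside the $\psi$-states and $\A(\varphi \limp \theta)$ places the $\varphi$-states inside the $\theta$-states.

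For completeness, I would run the standard argument. Given a \KHiaxiom-consistent $\Gamma' \subseteq \KHilogic$, Lindenbaum's lemma (using that \KHilogic is enumerable) extends it to a maximally consistent set $\Gamma$. I then take the canonical model $\modults^\Gamma$ of \Cref{def:cm-ults-lkhi}, whose states are the MCSs sharing $\Gamma$'s knowing-how fragment and whose ``actions'' are precisely the pairs $\tup{\psi,\varphi}$ with $\khi(\psi,\varphi)\in\Gamma$. By \Cref{pro:cm-ults-lkhi} this structure is a genuine \ults, and by the Truth Lemma \Cref{tlm:cm-ults-lkhi} membership coincides with satisfaction at every state; in particular $\modults^\Gamma,\Gamma \models \varphi$ for all $\varphi \in \Gamma \supseteq \Gamma'$, so $\Gamma'$ is satisfiable and strong completeness follows.

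The main obstacle is not the final assembly, which is routine once the preceding results are granted, but the content of the Truth Lemma being invoked, and specifically the right-to-left direction of its $\khi$ case. There the singleton strategy $\cset{\tup{\psi,\varphi}}$ must be shown to witness the semantic clause: one has to verify it is strongly executable at every $\psi$-state and that all its successors are $\varphi$-states. This is exactly where \axm{KhE} (to guarantee a $\varphi$-state exists so that the canonical action is genuinely executable) and the successor-uniformity of the canonical relation (\Cref{pro:cm-ults-lkhi-oneall} and \Cref{pro:cm-ults-lkhi-succpre}) carry the load, and getting the interaction between strong executability and the formula-as-action encoding exactly right is the delicate point I would guard most carefully.
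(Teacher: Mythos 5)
Your proposal is correct and follows essentially the same route as the paper: Lindenbaum extension, the canonical model of \Cref{def:cm-ults-lkhi}, \Cref{pro:cm-ults-lkhi} to certify it is an \ults, and the Truth Lemma (\Cref{tlm:cm-ults-lkhi}) to conclude satisfiability of $\Gamma'$ at $\Gamma$. Your explicit soundness verification (in particular of \axm{KhE} and \axm{KhA}) fills in what the paper dismisses as ``rather straightforward,'' and your identification of the right-to-left $\khi$ case of the Truth Lemma---where \axm{KhE} guarantees a $\varphi$-state so the singleton canonical action is strongly executable---matches the paper's actual use of that axiom.
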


\subsection{Comparing \texorpdfstring{\lts}{LTS} semantics and \texorpdfstring{\ults}{LTSU} semantics} \label{subsec:comparing}

The provided axiom system can be used to compare the notion of \emph{knowing how} under \ltss with that under \ultss. Here is a first observation.

\begin{proposition}\label{pro:subsystem}
  Axioms \axm{KhE} and \axm{KhA} are \KHaxiom-derivable (thus, $\KHiaxiom$ is a subsystem of $\KHaxiom$).
\end{proposition}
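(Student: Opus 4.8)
The plan is to derive the two axioms of $\axset_{\ults}$ as theorems of $\KHaxiom$, treating \axm{KhA} first because it is the simpler of the two and will be reused in the derivation of \axm{KhE}. Throughout I drop the agent index, reading $\khi$ as $\kh$: the derivations are uniform in $i$, so this loses nothing. Recall that over $\KHaxiom$ the universal modality is the abbreviation $\A\varphi := \kh(\neg\varphi,\bot)$, and hence $\E\varphi := \neg\A\neg\varphi = \neg\kh(\neg\neg\varphi,\bot)$; keeping track of this unfolding is what the whole argument turns on.

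For \axm{KhA}, i.e.\ $\vdash (\A(\chi \limp \psi) \land \kh(\psi,\varphi) \land \A(\varphi \limp \theta)) \limp \kh(\chi,\theta)$, I would apply \axm{EMP} to each universal conjunct separately: from $\A(\chi \limp \psi)$ it yields $\kh(\chi,\psi)$, and from $\A(\varphi \limp \theta)$ it yields $\kh(\varphi,\theta)$. Chaining with the given $\kh(\psi,\varphi)$ via two applications of \axm{COMPKh} produces first $\kh(\chi,\varphi)$ and then $\kh(\chi,\theta)$. A final propositional step (\axm{TAUT}, \axm{MP}) packages these implications together. Note that this uses only $\axset$ together with \axm{EMP} and \axm{COMPKh}, so it is legitimately a $\KHaxiom$-derivation.

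For \axm{KhE}, i.e.\ $\vdash (\E\psi \land \kh(\psi,\varphi)) \limp \E\varphi$, I would argue by contraposition: since $\neg\E\chi = \A\neg\chi$, the target is propositionally equivalent to $(\kh(\psi,\varphi) \land \A\neg\varphi) \limp \A\neg\psi$, so it suffices to derive the latter. Unfolding $\A\neg\varphi$ as $\kh(\neg\neg\varphi,\bot)$, the derivation runs in three moves. First, rewrite $\kh(\neg\neg\varphi,\bot)$ to $\kh(\varphi,\bot)$ using the already-derived \axm{KhA}, instantiated with premises $\A(\varphi \limp \neg\neg\varphi)$ and $\A(\bot \limp \bot)$, both obtained from \axm{TAUT} and \axm{NECA}. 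Second, compose $\kh(\psi,\varphi)$ with $\kh(\varphi,\bot)$ through \axm{COMPKh} to get $\kh(\psi,\bot)$. Third, rewrite $\kh(\psi,\bot)$ back to $\kh(\neg\neg\psi,\bot) = \A\neg\psi$, again by \axm{KhA} with premise $\A(\neg\neg\psi \limp \psi)$. Folding $\A\neg\psi$ and $\A\neg\varphi$ back into $\E$ gives \axm{KhE}.

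The only delicate point is the bookkeeping forced by the definition $\A\chi := \kh(\neg\chi,\bot)$: because this inserts a negation, $\A\neg\varphi$ is literally $\kh(\neg\neg\varphi,\bot)$ rather than $\kh(\varphi,\bot)$, so \axm{COMPKh} cannot be applied directly to strip it. The purpose of the two \axm{KhA} appeals above is precisely to provide the limited ``replacement under $\kh$'' needed to erase these double negations, which is exactly why \axm{KhA} must be established before \axm{KhE}. Everything else is routine propositional manipulation, and since all appeals are to axioms of $\axset \cup \axset_{\lts} = \KHaxiom$, both \axm{KhE} and \axm{KhA} are $\KHaxiom$-derivable; as $\KHiaxiom = \axset + \axset_{\ults}$ and $\axset$ is shared, this establishes that $\KHiaxiom$ is a subsystem of $\KHaxiom$.
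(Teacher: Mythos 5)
Your derivations are correct, and they follow what is clearly the intended route (the paper states this proposition without spelling out the proof): \axm{KhA} is obtained from two instances of \axm{EMP} chained together with \axm{COMPKh}, and \axm{KhE} by contraposition, where your appeals to the already-derived \axm{KhA} (equivalently, direct uses of \axm{EMP} plus \axm{COMPKh}) correctly absorb the double negations forced by the abbreviation $\A\chi := \kh(\neg\chi,\bot)$. Your bookkeeping on this last point, and the observation that all steps combine axiom instances purely propositionally (so no deduction-theorem issues arise), make the argument complete.
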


Hence, the \emph{knowing how} operator under \lts is at least as strong as its \ults-based counterpart: every formula valid under \ultss is also valid under \ltss. The following fact shows that the converse is not the case.

\begin{proposition}\label{fact:axiom-fail}
  Within \ults, axioms \axm{EMP} and \axm{COMPKh} are not valid.
  \begin{proof}
    Consider the \ults $\model$ shown below, with the collection of sets of available plans for agent $i$ (i.e., the set $\S_i$) depicted on the right. Recall that $\khi$ acts globally.
    \begin{nscenter}
      \begin{tikzpicture}[->]
        \node [state, label = {[label-state]left:$w$}] (w1) {$p$};      
        \node [state, right = of w1] (w2) {$q$};
        \node [state, right = of w2] (w3) {$r$};
        \node [state, above = of w3] (w4) {};
        
        \path (w1) edge node [label-edge, below] {$a$} (w2)
                   edge node [label-edge, above] {$c$} (w4)
              (w2) edge node [label-edge, below] {$b$} (w3);
      \end{tikzpicture}
      \hspace{1.5cm}
      \begin{picture}(60,0)
        \put(-20,20){
          $\S_i = \left\{
            \begin{array}{c}
              \{a\},\ \{b\}\\
              \{ab, c\}
            \end{array}
            \right\}$}
        \end{picture}
    \end{nscenter}
    With respect to \axm{EMP}, notice that $\A(p\ra p)$ holds; 
    yet, $\khi(p,p)$ fails since there is no $\strategy \in \S_i$ leading from $p$-states to $p$-states. More generally, \axm{EMP} is valid over \ltss because the empty plan $\epsilon$, strongly executable everywhere, is always available. However, in a $\ults$, the plan $\epsilon$ might not be available to the agent (i.e., $\epsilon \notin \DS{i}$); and even if $\epsilon$ is avaibable, it might be indistinguishable from other plans with different behaviour.
      
  With respect to \axm{COMPKh}, notice that $\khi(p,q)$ and $\khi(q,r)$ hold, witness $\cset{a}$ and $\cset{b}$, resp. However, there is no $\strategy\in\S_i$ containing only plans that, when starting on $p$-states, lead only to $r$-states. This is due to the fact that, although $ab$ acts as needed, it cannot be distinguished from $c$, which behaves differently. Thus, $\khi(p,r)$ fails. More generally, \axm{COMPKh} is valid over \lts because the sequential composition of the plans that make true the conjuncts in the antedecent is a witness that makes true the consequent. However, in an \ults, this composition might be unavailable or else indistinguishable from other plans.
  \end{proof}
\end{proposition}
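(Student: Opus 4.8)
The plan is to refute each axiom by exhibiting a single pointed \ults that falsifies it; with a little care one model handles both at once. The guiding intuition comes from why these axioms hold in the \lts semantics in the first place: \axm{EMP} is valid there only because the empty plan $\epsilon$ is always available and strongly executable everywhere while fixing each state, and \axm{COMPKh} is valid only because the sequential composition $\sigma\tau$ of two witnessing plans is itself always available. In an \ults neither of these ``free'' plans need belong to any $\strategy \in \S_i$, and even when present a plan may be grouped, via the indistinguishability relation, with a plan of different behaviour (see \Cref{rem:corresp}). That is precisely the slack I would exploit.

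For \axm{EMP} I would take $\psi = \varphi = p$, so that $\A(p \limp p)$ holds trivially, and build a model in which no $\strategy \in \S_i$ carries every $p$-state only to $p$-states. The cleanest device is to arrange $\epsilon \notin \DS{i}$ and to offer only plans whose first action already leaves the $p$-region; then condition \ITMKHii of \Cref{def:sem-esm} fails for each available class. For \axm{COMPKh} I would place three atoms $p,q,r$ along a short path: a single action $a$ taking the $p$-state to a $q$-state and a single action $b$ taking that $q$-state to an $r$-state, so that the singleton classes $\cset{a}$ and $\cset{b}$ witness $\khi(p,q)$ and $\khi(q,r)$ respectively. The composition $ab$ does take the $p$-state to an $r$-state, but I would make the only equivalence class containing $ab$ also contain a divergent plan $c$ that leaves the $p$-state for a non-$r$ state. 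Then no class in $\S_i$ uniformly sends $p$-states to $r$-states, so $\khi(p,r)$ fails although $\khi(p,q) \land \khi(q,r)$ holds.

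The concrete verification then reduces to checking, for each of the finitely many $\strategy \in \S_i$, the two semantic clauses of \Cref{def:sem-esm}: strong executability of $\strategy$ at every relevant $\psi$-state, and the image inclusion. Here I would lean on \Cref{def:strat-exec}, namely that $\stexec(\strategy) = \bigcap_{\sigma \in \strategy} \stexec(\sigma)$, so that a single badly behaved plan inside a class spoils strong executability of the whole class; and on the single-plan notion of \Cref{def:plans-exec} to see, for instance, that $b$ is not strongly executable at the $p$-state because it has no outgoing $b$-edge there. Since $\khi$ acts globally (cf. \Cref{pro:ults:universal}), each clause must be read as a universal statement over all $\psi$-states, not merely at the evaluation point.

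The main obstacle is not any single computation but the bookkeeping of \emph{exhaustiveness}: to refute an axiom it is not enough to note that the obvious witness fails, I must confirm that \emph{every} $\strategy \in \S_i$ fails the relevant condition. The design choices above are exactly what keep this manageable --- holding $\S_i$ small, omitting $\epsilon$, and bundling the good composition $ab$ with the divergent plan $c$ so that \ITMKHii is violated for that class regardless. Packaging everything into one \ults keeps the exhaustiveness check short and makes transparent the structural reason both axioms fail: an \ults-agent may possess the right actions and yet be unable to deploy them in isolation.
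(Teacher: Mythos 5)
Your proposal is correct and follows essentially the same route as the paper: the paper's counterexample is exactly the model you describe --- a path $p \xrightarrow{a} q \xrightarrow{b} r$ together with a divergent $c$-edge from the $p$-state to an unlabeled state, with $\S_i = \cset{\cset{a}, \cset{b}, \cset{ab,c}}$, so that $\khi(p,p)$ fails (refuting \axm{EMP}) and $\khi(p,q) \land \khi(q,r)$ holds while $\khi(p,r)$ fails because $ab$ is bundled with $c$ (refuting \axm{COMPKh}). Your exhaustiveness checks, including the observation that $\cset{b}$ is ruled out by failure of strong executability at the $p$-state rather than by the image condition, match the verification implicit in the paper's proof.
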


From these two observations it follows that $\kh$ under \ultss is strictly weaker than $\kh$ under \ltss: adding uncertainty about the effect of actions does change the logic. However, the \ults framework is general enough to capture the \lts semantics. To establish the connection, let us work in a single-agent setting (i.e., with a single modality $\kh$ and no subindexes for $\DS{i}$ and $\S_i$).

{\smallskip}

Given the discussion in \Cref{fact:axiom-fail}, it should be clear that there is an obvious class of \ultss in which \axm{EMP} and \axm{COMPKh} are valid. This is the class of \ultss in which the agent has every plan available 
and can distinguish between any two of them (i.e., $\S = \csetsc{\cset{\sigma}}{\sigma \in \ACT^*}$). This is because, in such models, $\epsilon$ is available and distinguishable from other plans (for \axm{EMP}) and from $\cset{\sigma_1} \in \S$ and $\cset{\sigma_2} \in \S$ it follows that $\cset{\sigma_1\sigma_2} \in \S$ (for \axm{COMPKh}).   
Clearly, other, more general, classes can be defined, but the one introduced here serves as an example.

\begin{proposition}
  Let $\modlts=\tup{\W,\R,\V}$ be an LTS over $\ACT$, define
  $\model_\modlts = \tup{\W,\R, \S, \V}$, where
  $\S=\csetsc{\cset{\sigma}}{\sigma\in\ACT^*}$.  Let
  $\cults:=\csetsc{\model_\modlts}{\modlts \mbox { is an LTS}}$.
  Given $\model=\tup{\W,\R,\S,\V}$ an \esm in $\cults$, define
  $\modlts_\model=\tup{\W,\R,\V}$.  Then, for every
  $\varphi \in \KHlogic$,
  $\truthset{\modlts}{\varphi} = \truthset{\model_\modlts}{\varphi}$
  and
  $\truthset{\model}{\varphi} = \truthset{\modlts_\model}{\varphi}$.
\end{proposition}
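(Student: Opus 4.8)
The plan is to notice that the two transformations $\modlts \mapsto \model_\modlts$ and $\model \mapsto \modlts_\model$ are mutually inverse bijections between \ltss and the members of $\cults$, so that a single induction suffices, and then to prove truth-set preservation by structural induction, where the only nontrivial case collapses the quantification over strategies to a quantification over plans.

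First I would record the mutual-inverse fact. For any \lts $\modlts = \tup{\W,\R,\V}$, forgetting the component $\S$ after adding it returns the original tuple, so $\modlts_{\model_\modlts} = \modlts$. Conversely, if $\model = \tup{\W,\R,\S,\V} \in \cults$ then by definition $\S = \csetsc{\cset{\sigma}}{\sigma \in \ACT^*}$, and hence $\model_{\modlts_\model} = \model$. Consequently, once I establish the first equality $\truthset{\modlts}{\varphi} = \truthset{\model_\modlts}{\varphi}$ for every \lts $\modlts$, the second equality follows by instantiating that identity at $\modlts := \modlts_\model$ and using $\model_{\modlts_\model} = \model$.

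Next I would prove $\truthset{\modlts}{\varphi} = \truthset{\model_\modlts}{\varphi}$ by induction on $\varphi$. Since $\modlts$ and $\model_\modlts$ share the same domain, the same relations, and the same valuation, the atomic case is immediate and the Boolean cases follow from the induction hypothesis. The only case requiring attention is $\kh(\psi,\varphi)$, and here the key observation is the singleton reduction: for every $\sigma \in \ACT^*$, \Cref{def:strat-exec} gives $\stexec(\cset{\sigma}) = \stexec(\sigma)$, and the definition of $\R_\strategy$ gives $\R_{\cset{\sigma}} = \R_\sigma$. Because $\S = \csetsc{\cset{\sigma}}{\sigma \in \ACT^*}$, quantifying over $\strategy \in \S$ in \Cref{def:sem-esm} is the same as quantifying over singletons $\cset{\sigma}$, i.e.\ over $\sigma \in \ACT^*$. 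By the induction hypothesis applied to the strict subformulas $\psi$ and $\varphi$ we have $\truthset{\modlts}{\psi} = \truthset{\model_\modlts}{\psi}$ and $\truthset{\modlts}{\varphi} = \truthset{\model_\modlts}{\varphi}$; substituting the singleton reduction, conditions (i) and (ii) of the \ults clause for the witness $\cset{\sigma}$ become literally conditions (i) and (ii) of the \lts clause of \Cref{def:sem-abmap} for the plan $\sigma$. Since $\kh$ acts globally, each truth set equals $\W$ when such a witness exists and $\emptyset$ otherwise, so the two coincide.

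I do not expect a real obstacle: the entire content lies in the singleton identities $\stexec(\cset{\sigma}) = \stexec(\sigma)$ and $\R_{\cset{\sigma}} = \R_\sigma$, which make the existential over $\S$ and the existential over $\ACT^*$ range over the same witnesses. The only points demanding care are purely organisational, namely verifying that the two transformations are genuinely inverse so that one induction covers both equalities, and being careful to apply the induction hypothesis separately to $\psi$ and to $\varphi$ rather than to $\kh(\psi,\varphi)$ as a whole.
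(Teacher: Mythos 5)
Your proof is correct and is exactly the intended argument: the paper states this proposition without an explicit proof, precisely because the content reduces to the singleton identities $\stexec(\cset{\sigma}) = \stexec(\sigma)$ and $\R_{\cset{\sigma}} = \R_\sigma$, which make the existential over $\S$ coincide with the existential over $\ACT^*$ in the $\kh$ clause. Your additional organisational observation---that the two transformations are mutually inverse on $\cults$, so one induction yields both equalities---is a clean way to package what the paper leaves implicit, and the proof has no gaps.
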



Since we have a class of \ultss in correspondence with the class of
all \ltss, we get a direct completeness result:

\begin{theorem}\label{teo:kh-sound-complete-cults}
  The axiom system \KHaxiom (\Cref{tab:khaxiom}) is sound and strongly complete for \KHlogic w.r.t. the class $\cults$. 
\end{theorem}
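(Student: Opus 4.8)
The plan is to obtain the result as a direct transfer from \Cref{teo:kh-sound-complete-lts} (soundness and strong completeness of \KHaxiom over the class of \emph{all} \ltss), routing premises and conclusion through the truth-preserving correspondence established in the preceding proposition. That proposition shows that $\modlts \mapsto \model_\modlts$ and $\model \mapsto \modlts_\model$ are mutually inverse maps between \ltss and the members of $\cults$, and that both preserve the extension of every $\KHlogic$-formula: $\truthset{\modlts}{\varphi} = \truthset{\model_\modlts}{\varphi}$ and $\truthset{\model}{\varphi} = \truthset{\modlts_\model}{\varphi}$. Since each paired structure shares its domain $\W$ with its image, these equalities say precisely that truth at a point is preserved in both directions. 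The upshot I would exploit is that validity over $\cults$ coincides with validity over the class of all \ltss, and likewise for satisfiability; all the substantive work is thus already done, and only the plumbing remains.

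For strong soundness I would argue as follows. Suppose $\Gamma \vdash \varphi$ in \KHaxiom, and let $\model, w$ be a pointed \ults in $\cults$ with $\model, w \models \gamma$ for every $\gamma \in \Gamma$. Passing to the underlying $\modlts_\model$, the correspondence gives $\modlts_\model, w \models \gamma$ for every $\gamma \in \Gamma$; by the soundness half of \Cref{teo:kh-sound-complete-lts} we obtain $\modlts_\model, w \models \varphi$, and transferring back along the same correspondence yields $\model, w \models \varphi$. For strong completeness I would start from an arbitrary \KHaxiom-consistent set $\Gamma \subseteq \KHlogic$. By the completeness half of \Cref{teo:kh-sound-complete-lts} there is a pointed \lts $\modlts, w$ satisfying every formula of $\Gamma$; applying the correspondence, $\model_\modlts, w$ satisfies every formula of $\Gamma$ as well, and by construction $\model_\modlts \in \cults$. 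Hence $\Gamma$ is satisfiable within $\cults$, which is strong completeness.

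I do not expect a genuine obstacle here; the proof is a routine ``push-through'' once the correspondence is in place. The two points that merit care are (i) confirming that the preceding proposition delivers point-wise preservation in \emph{both} directions rather than mere equivalence of validities, which it does by equating the full truth-sets over a shared domain, and (ii) observing that $\cults$ is by definition exactly the image of $\modlts \mapsto \model_\modlts$, so every $\model \in \cults$ really has an underlying \lts $\modlts_\model$ to fall back on (and indeed $\modlts_{\model_\modlts} = \modlts$). With these in hand the soundness and strong completeness of \KHaxiom over $\cults$ follow immediately from the corresponding facts over all \ltss.
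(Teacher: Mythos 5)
Your proposal is correct and takes essentially the same approach as the paper: the paper states this theorem as an immediate consequence of the truth-preserving correspondence between \ltss and members of $\cults$, transferring soundness and strong completeness directly from \Cref{teo:kh-sound-complete-lts}. Your write-up merely makes explicit the ``push-through'' argument that the paper leaves implicit.
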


\subsection{Complexity}\label{sec:complexity}
\label{subsec:complexity}
Here we investigate the computational complexity of the satisfiability
problem of $\KHilogic$ under the $\ults$-based semantics.  We will
establish membership in \NP by showing a polynomial size model property. 

Given a formula, we will show that it is possible to select just a
piece of the canonical model which is relevant for its evaluation. The
selected model will preserve satisfiability, and moreover, its size
will be polymonial w.r.t.\ the size of the input formula.

\begin{definition}[Selection function]
  \label{def:selection-function}
  Let
  $\cmodel=\tup{\W^\Gamma,\R^\Gamma,\cset{\S_i^\Gamma}_{i\in\AGT},\V^\Gamma}$
  be a canonical model for an MCS $\Gamma$ (see
  \Cref{def:cm-ults-lkhi}); take an MCS $w\in\W^\Gamma$ and a formula
  $\varphi\in\KHilogic$. Define
  $\ACT_\varphi := \cset{\tup{\theta_1,\theta_2} \in\ACT^\Gamma \mid
    \kh_i(\theta_1,\theta_2) \text{ is a subformula of } \varphi}$.  A
  \emph{canonical selection function} $\sel^{\varphi}_{w}$ is a
  function that takes $\cmodel$, $w$ and $\varphi$ as input, returns a
  set $\W'\subseteq \W^\Gamma$, and is s.t.:
    
  \begin{enumerate} \itemsep 0cm
  \item \label{def:s-f1} $\sel^{\varphi}_{w}(p)=\{w\}$; \quad \quad
    $\sel^{\varphi}_{w}(\neg\varphi_1)=\sel^\varphi_w(\varphi_1)$;
    \quad \quad
    $\sel^{\varphi}_{w}(\varphi_1\vee\varphi_2)=
    \sel^{\varphi}_{w}(\varphi_1)\cup\sel^{\varphi}_{w}(\varphi_2)$;
    
  \item \label{def:s-f4} If
    $\truthset{\cmodel}{\kh_i(\varphi_1,\varphi_2)}\neq\emptyset
    \text{ and } \truthset{\cmodel}{\varphi_1}=\emptyset$:
    $\sel^{\varphi}_{w}(\kh_i(\varphi_1,\varphi_2)) = \{w\}$;
    
  \item \label{def:s-f5} If
    $\truthset{\cmodel}{\kh_i(\varphi_1,\varphi_2)}\neq\emptyset
    \text{ and } \truthset{\cmodel}{\varphi_1}\neq\emptyset$:
        
    $\sel^{\varphi}_{w}(\kh_i(\varphi_1,\varphi_2)) = \{w_1,w_2\} \cup
    \sel^{\varphi}_{w_1}(\varphi_1) \cup
    \sel^{\varphi}_{w_2}(\varphi_2)$,
    \noindent where $w_1$, $w_2$ are
    s.t. $(w_1,w_2)\in\R^\Gamma_{\tup{\varphi_1,\varphi_2}}$;
    
  \item \label{def:s-f6} If
    $\truthset{\cmodel}{\kh_i(\varphi_1,\varphi_2)}=\emptyset$ (note
    that $\truthset{\cmodel}{\varphi_1}\neq\emptyset$):
        
    For all set of plans $\strategy$, either
    $\truthset{\cmodel}{\varphi_1} \not\subseteq \stexec(\strategy)$
    or
    $\R_\strategy^\Gamma(\truthset{\cmodel}{\varphi_1}) \not\subseteq
    \truthset{\cmodel}{\varphi_2}$. For each $a \in \ACT_\varphi$:

    \begin{enumerate}
    \item if
      $\truthset{\cmodel}{\varphi_1} \not\subseteq \stexec(\cset{a})$:
      we add $\{w_1\} \cup \sel^{\varphi}_{w_1}(\varphi_1)$ to
      $\sel^{\varphi}_{w}(\kh_i(\varphi_1,\varphi_2))$, where
      $w_1 \in\truthset{\cmodel}{\varphi_1}$ and
      $w_1 \notin \stexec(\cset{a})$;
        
    \item if
      $\R_\strategy^\Gamma(\truthset{\cmodel}{\varphi_1})
      \not\subseteq \truthset{\cmodel}{\varphi_2}$ we add
      $\{w_1,w_2\} \cup \sel^{\varphi}_{w_1}(\varphi_1) \cup
      \sel^{\varphi}_{w_2}(\varphi_2)$ to
      $\sel^{\varphi}_{w}(\kh_i(\varphi_1,\varphi_2))$, where
      $w_1\in\truthset{\cmodel}{\varphi_1}$,
      $w_2 \in \R^\Gamma_a(w_1)$ and
      $w_2 \notin \truthset{\cmodel}{\varphi_2}$.
    \end{enumerate}
  \end{enumerate}
\end{definition}
    
We can now select a small model which preserves the
satisfiability of a given formula.

\begin{definition}[Selected model]
  Let $\cmodel$ be the canonical model for an MCS $\Gamma$, $w$ a
  state in $\cmodel$, and $\varphi$ an $\KHilogic$-formula. Let
  $\sel^\varphi_w$ be a selection function, we define the
  \emph{model selected by} $\sel^\varphi_w$ as
  $\model^\varphi_w=\tup{\W^\varphi_w,\R^\varphi_w,\cset{(\S^\varphi_w)_i}_{i\in\AGT},\V^\varphi_w}$,
  where

    \begin{compactitemize}
      \item $\W^\varphi_w := \sel^\varphi_w(\varphi)$; 
      \item $(\R^\varphi_w)_{\tup{\theta_1,\theta_2}} := \R^\Gamma_{\tup{\theta_1,\theta_2}}\cap(\W^\varphi_w)^2$ for each $\tup{\theta_1,\theta_2} \in \ACT_\varphi$;
      \item $(\S^\varphi_w)_i := \cset{\cset{a} \ |\ a \in \ACT_\varphi} \cup \cset{\cset{\tup{\bot,\top}}}$, for $i\in\AGT$
        (and $(\R^\varphi_w)_{\tup{\bot,\top}} := \emptyset$);
        \item $\V^\varphi_w$ is the restriction of $\V^\Gamma$ to $\W^\varphi_w$.
    \end{compactitemize}
    \end{definition}
    
    Note that, although $\ACT_\varphi$ can be an empty set, each collection of sets of plans $(\S^\varphi_w)_i$ is not. Therefore, $\model^\varphi_w$ is an \esm.
    
    \begin{proposition}\label{prop:selection-preserves-sat}
    Let $\cmodel$ be a canonical model, $w$ a state in $\cmodel$ and $\varphi$ an $\KHilogic$-formula. Let $\model^\varphi_w$ be the selected model by a selection function $\sel^\varphi_w$.
    $\cmodel,w\models\varphi$ implies that 
    for all $\psi$ subformula of $\varphi$, and for all $v\in\W^\varphi_w$ we have that 
    $\cmodel,v\models\psi$ iff $\model^\varphi_w,v\models\psi$.
    Moreover, $\model^\varphi_w$ is polynomial on the size of $\varphi$.
    \end{proposition}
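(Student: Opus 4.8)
The plan is to establish the two assertions in turn: the truth-preservation equivalence by induction on the subformula $\psi$, and then the size bound by accounting for the states that $\sel^\varphi_w$ introduces. The running tool is that $\khi$ acts \emph{globally} (\Cref{pro:ults:universal}): the truth value of a knowing-how formula does not depend on the evaluation point, in $\cmodel$ as well as in $\model^\varphi_w$. With the hypothesis $\cmodel,w\models\varphi$ in force, the atomic case of the induction is immediate because $\V^\varphi_w$ is the restriction of $\V^\Gamma$ to $\W^\varphi_w$, and the Boolean cases follow at once from the inductive hypothesis on the smaller immediate subformulas, whose truth conditions are local. All the work is in the case $\psi=\khi(\psi_1,\psi_2)$, where globality reduces the goal to the single equivalence ``$\cmodel\models\khi(\psi_1,\psi_2)$ iff $\model^\varphi_w\models\khi(\psi_1,\psi_2)$''; throughout it I would use the inductive hypothesis on $\psi_1,\psi_2$ in the form $\truthset{\model^\varphi_w}{\psi_j}=\truthset{\cmodel}{\psi_j}\cap\W^\varphi_w$.

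For the forward direction I split along the selection clauses. If $\truthset{\cmodel}{\psi_1}=\emptyset$ (\Cref{def:s-f4}) then $\truthset{\model^\varphi_w}{\psi_1}=\emptyset$, and the ever-present strategy $\cset{\tup{\bot,\top}}$ witnesses $\khi(\psi_1,\psi_2)$ vacuously, since $(\R^\varphi_w)_{\tup{\bot,\top}}=\emptyset$ makes it strongly executable nowhere and the inclusion condition is then empty. If $\truthset{\cmodel}{\psi_1}\neq\emptyset$ (\Cref{def:s-f5}) then $\khi(\psi_1,\psi_2)\in\Gamma$, so $\tup{\psi_1,\psi_2}\in\ACT_\varphi$ and $\cset{\tup{\psi_1,\psi_2}}\in(\S^\varphi_w)_i$; I would verify the two conditions for this strategy in $\model^\varphi_w$. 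The inclusion condition is automatic from the definition of $\R^\Gamma$, every $\R^\Gamma_{\tup{\psi_1,\psi_2}}$-edge ending in a $\psi_2$-state, hence by hypothesis in $\truthset{\model^\varphi_w}{\psi_2}$. For strong executability I would use the selected state $w_2$ (with $\psi_2\in w_2$) and the ``complete bipartite'' shape of $\R^\Gamma_{\tup{\psi_1,\psi_2}}$ recorded in \Cref{pro:cm-ults-lkhi-oneall}: every $\psi_1$-state of $\W^\varphi_w$ is $\R^\Gamma_{\tup{\psi_1,\psi_2}}$-linked to $w_2$, so it keeps a successor after restriction to $\W^\varphi_w$.

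The hard direction, and the main obstacle, is the refutation clause $\truthset{\cmodel}{\khi(\psi_1,\psi_2)}=\emptyset$ (\Cref{def:s-f6}). I must show that \emph{every} strategy the semantics makes available to agent $i$ in $\model^\varphi_w$ fails one of the two conditions: the vacuous $\cset{\tup{\bot,\top}}$, and each singleton $\cset{a}$ coming from a knowing-how subformula. The vacuous strategy is ruled out by guaranteeing $\truthset{\model^\varphi_w}{\psi_1}\neq\emptyset$, i.e. that at least one $\psi_1$-witness is selected, which is possible because $\truthset{\cmodel}{\psi_1}\neq\emptyset$. For each relevant $a$, the contrapositive of the forward half of the truth lemma (\Cref{tlm:cm-ults-lkhi}) --- equivalently, the property recorded in \Cref{def:s-f6} that no available set of plans meets both conditions for $\khi(\psi_1,\psi_2)$ --- tells me $\cset{a}$ already fails a condition in $\cmodel$, and the selection has stored an explicit certificate of that failure. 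The delicate part is transferring the failure through the restriction $(\R^\varphi_w)_a=\R^\Gamma_a\cap(\W^\varphi_w)^2$: I must argue this restriction neither manufactures executability where there was none (if $\R^\Gamma_a(w_1)=\emptyset$ for the selected $w_1\in\truthset{\cmodel}{\psi_1}$, then a fortiori $(\R^\varphi_w)_a(w_1)=\emptyset$) nor destroys the chosen counterexample edge (for a failing inclusion, the selected pair $w_1,w_2$ with $(w_1,w_2)\in\R^\Gamma_a$ and $w_2\notin\truthset{\cmodel}{\psi_2}$ has both endpoints in $\W^\varphi_w$, so the edge survives and, by the hypothesis on $\psi_1,\psi_2$, still refutes the inclusion in $\model^\varphi_w$). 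This is exactly why each counterexample is stored together with both of its endpoints and with the recursive kernels $\sel^\varphi_{w_j}(\psi_j)$ that make the inductive hypothesis applicable at those states.

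For the polynomial bound the crucial observation is that, on a knowing-how subformula falling under \Cref{def:s-f5} or \Cref{def:s-f6}, the output of $\sel^\varphi_\bullet$ does \emph{not} depend on the base state: by globality the relevant sets $\truthset{\cmodel}{\cdot}$, and hence the chosen witnesses and counterexamples, are fixed once $\Gamma$, $\varphi$ and the selection function are fixed. Therefore recursive calls into such subformulas return identical sets however many base states reach them, the unions are idempotent, and there is no multiplicative blow-up; only the atomic and \Cref{def:s-f4} clauses reuse the current point, each contributing one already-counted state. I would then bound $\card{\W^\varphi_w}$ by $1$ plus the sum over the knowing-how subformulas of their own contribution --- at most two states under \Cref{def:s-f5}, at most $2\card{\ACT_\varphi}$ states under \Cref{def:s-f6}. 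Since both the number of knowing-how subformulas and $\card{\ACT_\varphi}$ are at most $\card{\varphi}$, this gives $\card{\W^\varphi_w}=O(\card{\varphi}^2)$, and the finitely many relations and the valuation are then polynomial as well, completing the argument.
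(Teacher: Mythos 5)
Your proof is correct and takes essentially the same route as the paper's: induction on subformulas, with the $\khi$ case split along the selection clauses, the singleton strategy $\cset{\tup{\psi_1,\psi_2}}$ (available via the truth lemma, since the formula lies in $\Gamma$) witnessing the positive case, and the stored counterexamples together with the failure of $\cset{\tup{\bot,\top}}$ ruling out every strategy of $(\S^\varphi_w)_i$ in the refutation case. The only notable difference is that your size bound is argued more carefully than the paper's (which merely asserts polynomiality): your observation that the $\khi$-clauses of $\sel^{\varphi}_{\bullet}$ do not depend on the base state, so that nested recursive calls return identical sets and cause no multiplicative blow-up, is precisely the detail needed to justify the $O(\card{\varphi}^2)$ bound.
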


    \begin{proof}
The proof proceeds by induction in the size of the formula. Boolean cases are simple, so we will 
proceed with the case in which $\psi=\kh_i(\psi_1,\psi_2)$. 
%
%
%

Suppose that $\cmodel,v\models \kh_i(\psi_1,\psi_2)$. Then, we have two cases:


\begin{itemize}
\item $\bs{\truthset{\cmodel}{\psi_1}\neq\emptyset}$:
by $\cmodel,v\models \kh_i(\psi_1,\psi_2)$, 
there exists a $\strategy \in \S^\Gamma_i$ s.t.
$\truthset{\cmodel}{\psi_1} \subseteq \stexec^{\cmodel}(\strategy)$ and
$\R^\Gamma_\strategy(\truthset{\cmodel}{\psi_1}) \subseteq \truthset{\cmodel}{\psi_2}$. 
By Truth Lemma, $\kh_i(\psi_1,\psi_2) \in v$, then $\kh_i(\psi_1,\psi_2) \in \Gamma$ and $\tup{\psi_1,\psi_2} \in\ACT_\Gamma$.
By the definition of $\R^\Gamma_{\tup{\psi_1,\psi_2}}$,
we have that for all $w \in \truthset{\cmodel}{\psi_1}$, it holds that
$\R^\Gamma_{\tup{\psi_1,\psi_2}}(w)\neq\emptyset$
and $\R^\Gamma_{\tup{\psi_1,\psi_2}}(w) \subseteq \truthset{\cmodel}{\psi_2}$.
Thus, 
$\truthset{\cmodel}{\psi_1} \subseteq \stexec^{\cmodel}(\cset{\tup{\psi_1,\psi_2}})$ and $\R^\Gamma_{\tup{\psi_1,\psi_2}}(\truthset{\cmodel}{\psi_1}) \subseteq \truthset{\cmodel}{\psi_2}$. 
Since $\truthset{\cmodel}{\psi_1}\neq\emptyset$, 
there exist $w_1,w_2\in\W^\Gamma$ s.t.
$(w_1,w_2) \in\R^\Gamma_{\tup{\psi_1,\psi_2}}$.

Notice that by definition of $\model^\varphi_w$, we have that
$\cset{\tup{\psi_1,\psi_2}} \in (\S^\varphi_w)_i$ and  that $(\R^\varphi_w)_{\tup{\psi_1,\psi_2}}$ is defined. Also, by the definition of $\sel^{\varphi}_{w}$, \Cref{def:s-f5}, there exist $w'_1,w'_2\in\W^\varphi_w$ s.t. $(w'_1,w'_2)\in(\R^\varphi_w)_{\tup{\psi_1,\psi_2}}$.
Let $v_1\in \truthset{\model^\varphi_w}{\psi_1}$
$\subseteq \truthset{\model^\Gamma}{\psi_1}$ (the inclusion holds by IH). Then, we have $v_1 \in \stexec^{\cmodel}(\cset{\tup{\psi_1,\psi_2}})$ and $\R^\Gamma_{\tup{\psi_1,\psi_2}}(v_1) \subseteq \truthset{\cmodel}{\psi_2}$.
Since for all $v_2 \in \R^\Gamma_{\tup{\psi_1,\psi_2}}(v_1)$, we have $v_2\in\truthset{\cmodel}{\psi_2}$, (in particular $v_2=w'_2$), then $w'_2 \in(\R^\varphi_w)_{\tup{\psi_1,\psi_2}}(v_1)$.
Thus, $v_1 \in \stexec^{\model^\varphi_w}(\cset{\tup{\psi_1,\psi_2}})$. 

Aiming for a contradiction, suppose now that $(\R^\varphi_w)_{\tup{\psi_1,\psi_2}}(v_1) = \R^\Gamma_{\tup{\psi_1,\psi_2}}(v_1)\cap \W^\varphi_w \not\subseteq \truthset{\model^\varphi_w}{\psi_2}$; and 
let $v_2\in(\R^\varphi_w)_{\tup{\psi_1,\psi_2}}(v_1)$ s.t. $v_2 \not\in \truthset{\model^\varphi_w}{\psi_2}$. Then we have that $(\R^\varphi_w)_{\tup{\psi_1,\psi_2}}(v_1) \subseteq \R^\Gamma_{\tup{\psi_1,\psi_2}}(v_1)$, but also by IH $v_2 \not\in \truthset{\cmodel}{\psi_2}$. Thus, $\cmodel,v\not\models \kh_i(\psi_1,\psi_2)$, which is a contradiction. 
Then, it must be the case that $(\R^\varphi_w)_{\cset{\tup{\psi_1,\psi_2}}}(v_1) \subseteq \truthset{\model^\varphi_w}{\psi_2}$. 
Since we showed that $\truthset{\model^\varphi_w}{\psi_1} \subseteq \stexec^{\model^\varphi_w}(\cset{\tup{\psi_1,\psi_2}})$ and
$(\R^\varphi_w)_{\cset{\tup{\psi_1,\psi_2}}}(\truthset{\model^\varphi_w}{\psi_1}) \subseteq \truthset{\model^\varphi_w}{\psi_2}$, we can conclude $\model^\varphi_w,v\models \kh_i(\psi_1,\psi_2)$.

\item $\bs{\truthset{\cmodel}{\psi_1}=\emptyset}$: this case is direct.

\end{itemize}

Suppose now that $\model^\varphi_w,v\models \kh_i(\psi_1,\psi_2)$:

\begin{itemize}
\item $\bs{\truthset{\model^\varphi_w}{\psi_1}\neq\emptyset}$:  first, notice that
by IH,  $\truthset{\cmodel}{\psi_1}\neq\emptyset$. Also, by
 $\model^\varphi_w,v\models \kh_i(\psi_1,\psi_2)$, we get 
 $\truthset{\model^\varphi_w}{\psi_1} \subseteq \stexec^{\model^\varphi_w}(\strategy')$
and $(\R^\varphi_w)_{\strategy'}(\truthset{\model^\varphi_w}{\psi_1}) \subseteq \truthset{\model^\varphi_w}{\psi_2}$, for some $\strategy' \in (\S^\varphi_w)_i$. 
Aiming for a contradiction, suppose $\cmodel,v\not\models \kh_i(\psi_1,\psi_2)$. 
This implies that for all $\strategy \in \S^\Gamma_i$, 
$\truthset{\cmodel}{\psi_1} \not\subseteq \stexec^{\cmodel}(\strategy)$ or
$\R^\Gamma_\strategy(\truthset{\cmodel}{\psi_1}) \not\subseteq \truthset{\cmodel}{\psi_2}$. 
Also, by definition of $\ACT_\varphi$ we have that for all $\strategy=\cset{a} \in (\S^\varphi_w)_i$, with $a \in \ACT_\varphi$, 
$\truthset{\cmodel}{\psi_1} \not\subseteq \stexec^{\cmodel}(\strategy)$ or
$\R^\Gamma_\strategy(\truthset{\cmodel}{\psi_1}) \not\subseteq \truthset{\cmodel}{\psi_2}$;
i.e., for all $a \in \ACT_\varphi$
$\truthset{\cmodel}{\psi_1} \not\subseteq \stexec^{\cmodel}(\cset{a})$ or $\R^\Gamma_{\cset{a}}(\truthset{\cmodel}{\psi_1}) \not\subseteq \truthset{\cmodel}{\psi_2}$.
Thus, there exists $w_1 \in\truthset{\cmodel}{\psi_1}$ s.t.
$w_1 \not\in \stexec^{\cmodel}(a)$ or there exists $w_2 \in\R^\Gamma_a(w_1)$ s.t. $w_2 \not\in \truthset{\cmodel}{\psi_2}$.
By definition of $\sel^{\varphi}_{w}$, \Cref{def:s-f6}, we add witnesses for each $a \in \ACT_\varphi$. So, let $\strategy' \in (\S^\varphi_w)_i$. If $\strategy'=\cset{\tup{\bot,\top}}$,
trivially we obtain 
$\emptyset \neq\truthset{\model^\varphi_w}{\psi_1}\not\subseteq \stexec^{\model^\varphi_w}(\strategy')=\emptyset$.
Then, take another $\strategy'=\cset{a}$ s.t. $a\in\ACT_\varphi$,
and $w'_1 \in \truthset{\model^\varphi_w}{\psi_1} \subseteq \truthset{\cmodel}{\psi_1}$. 
If $w'_1 \not\in \stexec^{\cmodel}(\cset{a})$, $\R^\Gamma_a(w'_1)=\emptyset$ and thus $(\R^\varphi_w)_a(w'_1)=\emptyset$ and therefore $w'_1 \not\in \stexec^{\model^\varphi_w}(\cset{a})$. 
On the other hand, if there exists $w_2 \in\R^\Gamma_a(w'_1)$ s.t. $w_2 \not\in \truthset{\cmodel}{\psi_2}$, then by $\sel^{\varphi}_{w}$ and IH,
there exists $w'_2 \in\W^\varphi_w$ s.t.
$w'_2 \in\R^\Gamma_a(w'_1)$ and $w'_2\not\in\truthset{\model^\varphi_w}{\psi_2}$, and 
consequently, there exists $w'_2 \in(\R^{\varphi}_{w})_a(w'_1)$ s.t. $w'_2 \not\in \truthset{\model^{\varphi}_{w}}{\psi_2}$. In any case, it leads to $\model^{\varphi}_{w},v\not\models \kh_i(\psi_1,\psi_2)$, a contradiction. Therefore, $\cmodel,v\models \kh_i(\psi_1,\psi_2)$.

\item $\bs{\truthset{\model^\varphi_w}{\psi_1}=\emptyset}$: similar to the previous case.

\end{itemize}

Thus, we proved the case $\cmodel,v\models \kh_i(\psi_1,\psi_2)$ iff
$\model^\varphi_w,v\models \kh_i(\psi_1,\psi_2)$. Therefore, we get that 
for all $\psi$ subformula of $\varphi$ and $v\in\W^\varphi_w$, $\cmodel,v\models\psi$ iff $\model^\varphi_w,v\models\psi$.
  Notice that the selection funcion adds worlds from $\cmodel$, only for each $\khi$-formula that 
  appears as a subformula of $\varphi$. Clearly, there is a polynomial number of such
  subformulas.
  Moreover, the number of worlds added at each time is also polynomial 
  in the size of $\varphi$. Hence, $\W^\varphi_w$ is of polynomial size. Since $(\S^\varphi_w)_i$ is also polynomial, we have that the size of $\model^\varphi_w$ is polynomial in the size of $\varphi$.

\end{proof}

  In order to prove that the satisfiability problem of $\KHilogic$
  is in \NP, it remains to show that the model checking problem is in \Poly.

\begin{proposition}\label{prop:modcheck}
  The model checking problem for $\KHilogic$ is in \Poly.
\end{proposition}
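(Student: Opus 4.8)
The plan is to show that model checking $\KHilogic$ over a \emph{finite} \ults $\modults=\tup{\W,\R,\S,\V}$ can be carried out by a recursive bottom-up labelling procedure that runs in time polynomial in $\card{\modults}+\card{\varphi}$. The key observation is that $\khi$ acts \emph{globally}: the truth set $\truthset{\modults}{\khi(\psi,\varphi)}$ is either all of $\W$ or empty, so we never need to re-evaluate it at individual states. First I would handle the atomic and Boolean cases in the standard way: compute $\truthset{\modults}{p}$ directly from $\V$, and obtain $\truthset{\modults}{\neg\varphi}$ and $\truthset{\modults}{\varphi\vee\psi}$ by set complement and union once the truth sets of the immediate subformulas are known. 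Each such step is linear in $\card{\W}$, and there are at most $\card{\varphi}$ subformulas, so the Boolean overhead is polynomial.

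The crux is the clause for $\khi(\psi,\varphi)$. By induction I may assume $\truthset{\modults}{\psi}$ and $\truthset{\modults}{\varphi}$ are already computed. By \Cref{def:sem-esm}, I must decide whether there exists $\strategy\in\S_i$ satisfying the two conditions $\truthset{\modults}{\psi}\subseteq\stexec(\strategy)$ and $\R_\strategy(\truthset{\modults}{\psi})\subseteq\truthset{\modults}{\varphi}$. The plan is to iterate over each candidate $\strategy\in\S_i$ and test both conditions; if some $\strategy$ passes, label every state of $\W$ with $\khi(\psi,\varphi)$, otherwise label none. To make each test polynomial, I would first precompute, for every basic action $a\in\ACT$ and every state $u\in\W$, the successor sets $\R_a(u)$ directly from the relations. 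Then for a single plan $\sigma\in\strategy$ the relation $\R_\sigma$ and the predicate ``$\sigma$ is SE at $u$'' can be evaluated by composing the $\R_a$ along $\sigma$ step by step, in time polynomial in $\card{\sigma}\cdot\card{\W}$ (following \Cref{def:plans-exec}); strong executability of the whole set $\strategy$ is then the intersection over $\sigma\in\strategy$ (\Cref{def:strat-exec}), and $\R_\strategy(\truthset{\modults}{\psi})=\bigcup_{\sigma\in\strategy}\R_\sigma(\truthset{\modults}{\psi})$.

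The main obstacle is that, a priori, a plan $\sigma$ or a set $\strategy\in\S_i$ could be arbitrarily long or large, which would break the polynomial bound. I would address this by noting that the model-checking input is a \emph{finite} \ults, so each $\S_i$ is a finite collection of finite sets of plans, presented explicitly as part of the input; hence $\sum_{\strategy\in\S_i}\sum_{\sigma\in\strategy}\card{\sigma}$ is bounded by the size of the input $\modults$. Consequently both the number of candidate strategies and the total length of all plans are part of $\card{\modults}$, so the nested iteration --- over subformulas of $\varphi$, then over $\strategy\in\S_i$, then over $\sigma\in\strategy$, then over the at most $\card{\W}^2$ relevant pairs --- runs in time polynomial in $\card{\modults}+\card{\varphi}$. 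Combining the polynomial cost of the Boolean cases with the polynomial cost of each $\khi$ case, and using that there are at most $\card{\varphi}$ subformulas to process, yields the claimed \Poly\ bound.
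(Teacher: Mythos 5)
Your proposal is correct and follows essentially the same route as the paper: a bottom-up labelling algorithm that handles Booleans in the standard way and, for $\khi(\psi,\varphi)$, iterates over the strategies in $\S_i$ and the plans within each strategy, checking strong executability and the image condition against the already-computed truth sets, exploiting that $\khi$ is global. Your added care about the input representation (that the finite \ults\ lists each $\S_i$ explicitly, bounding the total length of all plans by the input size) makes explicit a point the paper leaves implicit when it asserts that $v\in\stexec(\sigma)$ and $\R_\sigma$ are polynomially checkable, but it does not change the underlying argument.
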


\begin{proof}
  Given a pointed \ults $\model,w$ and a formula $\varphi$, we define
  a bottom-up labeling algorithm running in polymonial time which checks whether 
  $\model,w\models\varphi$. We follow
  the same ideas as for the basic modal logic {\sf K} (see
  e.g.,~\cite{blackburn06}). Below we introduce the case for formulas
  of the shape $\kh_i(\psi,\varphi)$, over an \ults
  $\model=\tup{\W,\R,\S,\V}$:

  \begin{algorithmic}
  \small
    \State{{\bf Procedure} ModelChecking($(\model,w)$, $\kh_i(\psi,\varphi))$} 
    \State $lab(\kh_i(\psi,\varphi))\gets \emptyset$;
    \ForAll{$\strategy\in\S_i$} 
      \State{$kh \gets True$}; 
      \ForAll{$\sigma\in\strategy$}
        \ForAll{$v\in lab(\psi)$}
          \State{$kh\gets (kh \ \& \ v\in\stexec(\sigma) \ \& \ \R_\sigma(v)\subseteq lab(\varphi))$}; 
        \EndFor
      \EndFor

      \If{$kh$}
        \State $lab(\kh_i(\psi,\varphi))\gets \W$;
      \EndIf
    \EndFor 
  \end{algorithmic}

  As $\S_i$ and each $\strategy\in\S_i$ are not empy, the first two
  {\bf for} loops are necessarily executed.  If
  $lab(\psi) =\emptyset$, then the formula $\kh_i(\psi,\varphi)$ is
  trivally true. Otherwise, $kh$ will remain true only if the
  appropriate conditions for the satisfiability of
  $\kh_i(\psi,\varphi))$ hold. If no $\strategy$ succeeds, then the
  initialization of $lab(\kh_i(\psi,\varphi))$ as $\emptyset$ will not
  be overwritten, as it should be.  Both $v\in\stexec(\sigma)$ and
  $\R_\sigma$ can be verified in polynomial time. Hence, the model
  checking problem is in \Poly.
\end{proof}

The intended result for satisfiability now follows.

    \begin{theorem}\label{th:khcomplexity}
    The satisfiability problem for $\KHilogic$ over $\ultss$ is \NP-complete.
    \begin{proof}  
      Hardness follows from \NP-completeness of propositional logic (a fragment of $\KHilogic$).
      By \Cref{prop:selection-preserves-sat}, each satisfiable formula 
      $\varphi$ has a model of polynomial size on $\varphi$. Thus, we can guess 
      a polymonial model $\model,w$, and verify $\model,w\models\varphi$ (which can 
      be done in polyonomial time, due to~\Cref{prop:modcheck}). 
      Thus, the satisfiability problem is in the class \NP.
    \end{proof}
    \end{theorem}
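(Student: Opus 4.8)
The plan is to establish the two components of \NP-completeness separately: \NP-hardness by a reduction from propositional satisfiability, and membership in \NP via the guess-and-check paradigm supported by a polynomial-size model property.

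For hardness, I would note that classical propositional logic embeds into \KHilogic as its $\khi$-free fragment, and that the atomic and Boolean clauses of \Cref{def:sem-esm} coincide with the usual propositional semantics (an \ults{} carries a valuation $\V$ on states, and a single pointed state already interprets the Boolean fragment faithfully). Hence a propositional formula is classically satisfiable iff it is satisfiable over some \ults, so \textsc{sat} reduces to \KHilogic-satisfiability via the identity map; since \textsc{sat} is \NP-hard, so is our problem.

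For membership, the crucial ingredient is that every satisfiable $\varphi$ admits a witnessing model whose size is polynomial in $\card{\varphi}$, so that a nondeterministic machine can guess it and verify satisfaction in polynomial time. I would argue as follows. If $\varphi$ is satisfiable then it is \KHiaxiom-consistent (a contradiction could be derived only from an unsatisfiable set, by soundness of \KHiaxiom); extending $\varphi$ to an MCS $\Gamma$ and applying the truth lemma (\Cref{tlm:cm-ults-lkhi}) yields $\modults^\Gamma, \Gamma \models \varphi$. Then \Cref{prop:selection-preserves-sat}, instantiated with $w := \Gamma$, gives a selected submodel $\model^\varphi_\Gamma$ that still satisfies $\varphi$ at $\Gamma$ and, crucially, has size polynomial in $\card{\varphi}$. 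This is exactly the polynomial-size model property required. The \NP{} procedure then guesses an \ults{} together with a designated world, both within the explicit polynomial bound furnished by \Cref{prop:selection-preserves-sat}, and runs the model checker of \Cref{prop:modcheck} to test satisfaction, accepting iff the test succeeds. Correctness holds in both directions: a successful guess plainly certifies satisfiability, and conversely any satisfiable $\varphi$ has such a polynomial witness by the previous step; the guessing and the polynomial-time verification together keep the whole computation in \NP.

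The genuinely hard part has, in effect, already been discharged by the preceding results: the substantive work lies in constructing the selection function and proving that it preserves satisfaction of all subformulas while keeping the resulting model polynomially bounded (\Cref{prop:selection-preserves-sat}), together with the polynomial model-checking bound (\Cref{prop:modcheck}). Granting these, the only remaining care is bookkeeping --- fixing a concrete polynomial $p(\card{\varphi})$ bounding the number of states and of relevant actions $\ACT_\varphi$ so that the nondeterministic machine can write the guessed \ults{} down in polynomial time, and checking that guess-plus-verify stays within \NP{}.
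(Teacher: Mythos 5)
Your proposal is correct and takes essentially the same approach as the paper: \NP-hardness via the propositional fragment, and membership via guess-and-check using the polynomial-size model property of \Cref{prop:selection-preserves-sat} together with the polynomial-time model checking of \Cref{prop:modcheck}. The only difference is that you spell out the step the paper leaves implicit (soundness gives consistency, Lindenbaum extension gives an MCS $\Gamma$, and the truth lemma puts $\varphi$ in the canonical model before the selection function is applied), which is a faithful elaboration rather than a different route.
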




\section{Final Remarks}\label{sec:final}
In this article, we introduce a new semantics for the \emph{knowing
  how} modality from~\cite{Wang15lori,Wang16,Wang2016}, over multiple
agents. It is defined in terms of \emph{uncertainty-based labeled
  transition systems (\ults)}. The novelty in our proposal is that
\ults{s} are equipped with an indistinguishability relation among
plans.  In this way, the epistemic notion of uncertainty of an agent
--which in turn defines her epistemic state-- is reintroduced, bringing the notion of \emph{knowing how} closer to the notion of \emph{knowing that} from classical epistemic
logics.  We believe that the semantics based on \esm can represent properly the
situation of a shared, objective description of the affordances of a
given situation, together with the different, subjective and personal
abilities of a group of agents; this seems difficult to achieve using
a semantics based on LTSs alone.

We show that the logic of~\cite{Wang15lori,Wang16,Wang2016} can be obtained
by imposing particular conditions over \ults{s}; thus, the new semantics is more general.
In particular, it provides counter-examples to \axm{EMP} and \axm{COMP}, which  directly link $\kh$ to
properties of the universal modality.\footnote{The rest of the axioms and rules in \KHaxiom  (those shown in block $\axset$) merely state properties of the universal modality and the fact that $\kh$ is global.}
Indeed, consider \axm{EMP}: even though $\A(\psi\ra\varphi)$ objectively holds in the underlying
LTS of an \esm, it could be argued that an agent might not have actions or plans at her disposal to turn those facts into knowledge, resulting in $\kh(\psi,\varphi)$ failing on the
model. Moreover, we have introduced a sound and strongly complete axiom system for the new semantics over \esm{s}. 
Finally, we showed that the satisfiability problem for our multi-agent knowing
how logic over the \ults-based semantics is \NP-complete, via a selection argument (and
model checking is polynomial).


\paragraph{Future work.}
There are several interesting lines of research to explore in the
future.  First, our framework easily accommodates other notions of
executability. For instance, one could require only some of the plans in
a set $\strategy$ to be strongly executable, weaken the condition
of \emph{strong} executability, etc. We can also explore the 
effects of imposing different restrictions
on the construction of the indistinguishability relation between plans. 
  It would be interesting to investigate which logics we
obtain in these cases, and their relations with the \lts
semantics. 

Second, to our knowledge, the exact complexity of
the satisfiability problem for knowing how over \lts{s} is open.
It would be interesting to see whether an adaptation of our selection
procedure works over \lts{s}. 

Third, the \esm semantics, in the multi-agent
setting, leads to natural definitions of concepts such as
\emph{global}, \emph{distributed} and \emph{common knowing how}, which
should be investigated in detail.

Finally, dynamic modalities capturing epistemic updates can be defined
via operations that modify the indistinguishability relation among
plans (as is done with other dynamic epistemic
operators, see, e.g.,~\cite{DELbook}).  This would allow to express
different forms of communication, such as \emph{public},
\emph{private} and \emph{semi-private} announcements concerning (sets of) plans.


\paragraph*{\bf Acknowledgments.} This work is partially supported by projects ANPCyT-PICT-2017-1130,
Stic-AmSud 20-STIC-03 `DyLo-MPC', Secyt-UNC, GRFT Mincyt-Cba,
and
by the Laboratoire International Associ\'e SINFIN.

\bibliographystyle{eptcs}
\bibliography{references}



\end{document}